\tikzset{main node/.style={circle,fill=black!10,draw,minimum size=0.5cm,inner sep=1pt},}
\newcommand{\cleft}[0]{C_{l}}
\newcommand{\cright}[0]{C_{r}}
\theoremstyle{plain}
\newtheorem{problem}[theorem]{Problem}
\title{Algorithmic Blockchain Channel Design}
\author{Georgia Avarikioti}{ETH Zurich, Switzerland}{zetavar@ethz.ch}{}{}
\author{Yuyi Wang}{ETH Zurich, Switzerland}{yuwang@ethz.ch}{}{}
\author{Roger Wattenhofer}{ETH Zurich, Switzerland}{wattenhofer@ethz.ch}{}{}
\authorrunning{G. Avarikioti, Y. Wang, R. Wattenhofer.}
\subjclass{\ccsdesc[500]{Theory of computation $\rightarrow$ Graph algorithms analysis}}
\keywords{blockchain, payment channels, layer 2 solution, network design, payment hubs, routing}
\begin{document}

\maketitle

\begin{abstract}
Payment networks, also known as channels, are a most promising solution to the throughput problem of cryptocurrencies. In this paper we study the design of capital-efficient payment networks, offline as well as online variants. We want to know how to compute an efficient payment network topology, how capital should be assigned to the individual edges, and how to decide which transactions to accept. Towards this end, we present a flurry of interesting results, basic but generally applicable insights on the one hand, and hardness results and approximation algorithms on the other hand.
\end{abstract}

\section{Introduction}
Cryptocurrencies such as Bitcoin \cite{nakamoto2008bitcoin} or Ethereum \cite{ethereum} have a serious throughput problem \cite{croman2016scaling}. They can process tens of transactions per second, whereas non-blockchain systems (credit card companies, inter-banking payment systems, paypal, etc.) can handle tens of \emph{thousands} of transactions per second. Various proposals have been made in an attempt to solve this throughput problem, e.g.,  sharding \cite{luu2016sharding,kokoris2017omniledger} or sidechains \cite{back2014sidechains}. However, payment networks (also known as channels) \cite{DW2015channels,poon2015lightning,raiden2017} are widely accepted to be the most promising of these so-called ``layer 2'' solutions, since payment networks allow data to go off-chain securely.

Duplex micropayment channels \cite{DW2015channels}, Lightning \cite{poon2015lightning} or Raiden \cite{raiden2017} are fast and scalable payment networks, where transactions between two users are executed in off-chain two-party channels. The blockchain is involved when opening a channel, as the foundation of a channel must be registered with the blockchain. In exceptions, if the two parties of a channel are in disagreement, the blockchain may also be involved as a safety net when closing a channel.

While the efficiency of channels is undisputed, payment networks have a reputation to be capital hungry and as such difficult to deploy.
In this paper we want to better understand this demand for capital, studying the issue from an algorithmic perspective. We want to know the complexity an operator of a payment network, a Payment Service Provider (PSP), will face when setting up a payment network.
%
%
\subsection{From Payment Channels to Network Design}

Consider a PSP wants to create a payment network. The PSP can open a channel between any two parties; technically this can be achieved using multi-party channels \cite{burchert2017multipartychannels}, where the two parties and the PSP join a three-party channel funded only by the PSP. 

Algorithmically speaking, a payment network is a graph, where each undirected edge $(u,v)$ is a payment channel between the parties $u,v$. When a channel (an edge) is established, PSP capital is locked into the channel on each side of the edge. 
This capital can then be moved on the channel, from $u$ to $v$ or vice versa, much like moving tokens from one side of an abacus to the other. For example, if initially a capital of 5 is locked on each side of the  $(u,v)$ channel, then a transaction with a value of 2 from $u$ to $v$ will reduce the capital on $u$'s side to 3, and increase the capital on $v$'s side to 7. Transactions can also be multi-hop, moving capital on each edge of the path, in the direction of the path of the transaction. The only constraint is that the capital on any side of any edge must be non-negative at all times.

The PSP needs to decide how to design the network, i.e., which edges (channels) the PSP should establish. Moreover, the PSP needs to decide how much capital it should assign to these newly established edges, in particular how much capital on each side of every edge.


Establishing a new channel not only involves capital (which is going to be reclaimed eventually), but will also cost (since each newly established channel needs to be registered with the blockchain). We model this channel opening cost as a constant, given that the fee the blockchain asks is (more or less) constant. The total cost is then the number of open channels (the edges of the network) times this constant cost to open each channel.


Our goal is to define a strategy for the PSP regarding which transactions to execute in order to maximize profit (fees from transactions minus costs to set up channels) and minimize capital (cryptomoney that is temporarily locked into channels). Note that there is a trade-off between profit and capital, as more capital may allow to accept more transactions, earning fees for each transaction, hence increasing profit
In particular, we discuss the following questions: What is the minimum capital needed to be able to accept a given set of transactions? What is the maximum profit we can achieve with a given capital? These questions are at the heart of understanding the Pareto-nature of the trade-off between profit and capital in payment networks.
\subsection{Related Work}
Current work on payment channels has mainly focused on designing routing algorithms for the implemented decentralized payment networks, such as the Lightning \cite{poon2015lightning} and Raiden \cite{raiden2017} networks. 
Prihodko et al.\ \cite{prihodko2016flare} present Flare, an efficient routing algorithm for the Lightning network by collecting information on the network's local topology. Malavolta et al.\ \cite{malavolta2017silentwhispers} introduce the IOU credit network SilentWhispers  where they use landmark routing to discover multiple paths and multi-party computation to decide the amount of capital to be locked on each path. Roos et al.\ \cite{roos2018routing} propose SpeedyMurmurs, a routing algorithm for payment networks that uses embedding-based path discovery to find routes from sender to receiver. However, all these protocols assume a network structure created by the individuals participating in the network. The goal is to discover the network topology and possible routes from sender to receiver of every transaction. Our objective is to design the optimal network structure assuming a central authority, the PSP.

An active line of research on payment channels is the construction of secure and private systems that can act as payment hubs. Heilman et al.\ \cite{heilman2017tumblebit} propose a Bitcoin-compatible construction of a payment hub for fast and anonymous off-chain transactions through an untrusted intermediary. Green et al.\ \cite{green2017bolt} present Bolt (Blind Off-chain Lightweight Transactions) for constructing privacy-preserving unlinkable and fast payment channels. However, they do not analyze how expensive the construction of a payment hub is for a PSP. In this work, we answer the following questions: is a payment hub a good solution for a PSP? How much capital is required to build a payment hub compared to the capital of a capital-optimal network? These answers are highly relevant to the economic viability of a payment hub as a practical solution for payment networks, and ultimately whether payment networks can solve the eminent throughput problem of cryptocurrencies.

Our paper can be seen as a cryptocurrency variant of classic work on network design. It is as such somewhat related to fundamental work starting in the 1970s.
For example, Johnson et al.\ \cite{johnson1978networkdesign} prove that given a weighted undirected graph, finding a subgraph that connects all the original vertices and minimizes the sum of the shortest path weights between all vertex pairs, subject to a budget constraint on the sum of its edge weights is NP-hard. 
Another similar problem is the optimum communication spanning tree problem \cite{hu1974ocst}, 
whose input is a set of nodes, the distances and requests between them, and the goal is to find the spanning tree that minimizes the cost of communication (for each pair, the request multiplied by the sum of distance). 
Our channel design problem seems similar to these problems since the routing of a transaction matters, and our objective is to minimize the capital on the channels (like the original network design work wants to minimize the sum of the distances). However, in contrast with traditional network design, in payment networks the order of transactions matters, as the capital moves from one side of the channels to the other. 
Moving capital gives network design a surprising twist, as classic techniques do not work anymore. With the anticipated importance of payment networks, we believe one should have a fresh look at network design.
%
%
\subsection{Our Contribution}
We introduce an algorithmic framework for the channel network design problem. 
First, we study the offline problem, i.e., we are given the future sequence of transactions. We show that maximizing the profit given the capital assignments is NP-hard, even for a single channel. Then, we present a fully polynomial time approximation scheme for the single channel case. Later, we consider the case where the PSP wants to maximize its profit and thus execute all profitable transactions. We prove that a hub (a star graph) is a $2$-approximation with respect to the capital. Moreover, we show the problem is NP-complete under graph restrictions. 

In addition, we examine online variants. First, we examine the online single channel case assuming the PSP wants to maximize its profit under capital constraints. We show that there is no deterministic competitive algorithm for adaptive adversaries. Later, we study the online channel design problem assuming all profitable transactions are executed. We show that the star graph yields an $O(\log C)$-competitive algorithm, where $C$ denotes the optimal capital.

Omitted proofs are included in the Appendix. 
%
%
\section{Notation and Problem Variants}\label{sec:model}
We assume the fee of a transaction on the blockchain to be constant, without loss of generality simply $1$. The fee of a transaction in the payment network cannot be higher than the fee on the blockchain, or a potential user may prefer the blockchain over the payment network. A rational PSP will ask for a transaction processing fee which is as high as possible but lower than the blockchain fee, hence for $1-\epsilon$. In our analysis we will usually assume that $\epsilon \rightarrow 0$.

Let us now formally define the problems we will study.
\begin{problem}[General Payment Network Design]\label{def:capitalcon}
\leavevmode 

\emph{Input:} Capital $C$, profit $P$, the sequence of $n$ transactions $t_i=(s_i,r_i,v_i)$ with $1\leq i \leq n$, each containing the sender node $s_i$, the receiver node $r_i$ and the value $v_i$ of the transaction $t_i$. 

\emph{Output:} Strategy $S={\{0,1\}}^n$, a binary vector where the $i^{th}$ position is $1$ if we choose to execute the $i^{th}$ transaction of the input and $0$ else. The graph $G(V,E,\cleft,\cright)$ is the network we created to execute the chosen transactions, where $V$ is the set of senders and receivers that participate in any transaction, $E$ is the set of channels we open and $\cleft,\cright$ the capital on each side of each edge. Each transaction can be routed arbitrarily in $G$, denoted by $S_e={\{-1,0,1\}}^n, for \ all \ e \in E$, i.e., $S_e(i)=1$ (or $-1$) if transaction $i$ is routed through edge $e$ from left to right (from right to left, respectively) and $S_e(i)=0$ if transaction $i$ is not routed through edge $e$.

Our goal is to return (if it exists) a strategy $S$, a graph $G$ and a routing $S_e$
subject to the following constraints:
\begin{enumerate}
\item $|S| - |E| \geq P $
\item $\forall e \in E , \forall j \in \{1,2, \dots n\},~  -\cright(e) \leq \sum_{i=1}^j  S_e(i) \cdot v_i \leq \cleft(e)$
\item $\sum_{\forall e \in E}\cleft(e)+\cright(e)+|E| \leq C$
\end{enumerate}
\end{problem}

The first inequality guarantees that the fees of the accepted transactions minus the cost of opening the channels is at least as high as the intended profit. The second inequality makes sure that at any time the capital on each side of each channel is non-negative. The third inequality ensures that the used capital on the channels and the cost of opening the channels is at most the available capital.

Problem \ref{def:capitalcon} in all its generality is difficult, as it features many variables. Consequentially, we mostly focus on the most interesting special cases of Problem \ref{def:capitalcon}: We consider transactions on a single channel between just two nodes. And we consider minimizing the capital assuming all profitable transactions are executed. 
Formally the problems we examine are the following.

\begin{problem}[Single Channel]
\label{def:SingleEdge}
Given a sequence of $n$ transactions $t_i=(s,r,v_i)$, where $s$ and $r$ are the nodes of the single edge $e$, a capital assignment $\cright(e), \cleft(e)$, and a profit $P$, decide whether there is a strategy $S$ such that $|S|\geq P$ and $\forall j \in [n],~  -\cleft(e) \leq \sum_{i=1}^j  S(i) \cdot v_i \leq \cright(e)$.
\end{problem}

\begin{problem}[Channel Design for All Transactions]
\label{def:maxprofit}
Given a sequence of $n$ transactions $t_i=(s_i,r_i,v_i)$, return the graph $G(V,E)$ that achieves maximum profit with minimum capital $C$.
\end{problem}

\begin{problem}[Capital Assignment and Routing]
\label{def:subgraph}
Given a graph $G(V,E)$, a sequence of $n$ transactions $t_i=(s_i,r_i,v_i)$ and a capital $C$, determine whether all transactions can be executed in $G$ with the given capital $C$. 
\end{problem}
%
%
\section{Offline Channel Design}
In this section, we study the offline channels network design problem, i.e., we assume we know the future transactions (for the next period). First, we explore the network topology for the general problem. Then, we examine the case where we are given a specific capital (or even a capital assignment) and we aim to maximize the PSP's profit, hence execute as many transactions as possible. We focus on solving the problem for a single edge of the network, since even in this simple case the problem is challenging. Later, we focus on minimizing the capital given the PSP wants to execute all the profitable transactions.
%
%
\subsection{Graph Topology}
We first prove some observations concerning the optimal graph structure. We consider as optimal the 
solution that maximizes the profit while respecting the capital constrains (optimization version of Problem \ref{def:capitalcon}).
\begin{lemma}\label{lem:twotx}
The graph of the optimal solution does not contain any node that sends and receives less than two transactions.
\end{lemma}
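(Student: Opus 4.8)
The plan is to argue by a local exchange argument: starting from any optimal solution, I would show that a node that is the endpoint (as sender or receiver) of at most one transaction can be deleted from the network without decreasing the profit $|S|-|E|$ or increasing the capital, so there is an optimal solution containing no such node.

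First I would dispose of the trivial case: if a node $v$ is the endpoint of no transaction at all, it is neither a sender nor a receiver, hence not in $V$ by definition, and there is nothing to prove. So assume $v$ is the endpoint of exactly one transaction $t_k=(s_k,r_k,v_k)$, and by symmetry take $v=s_k$. If the optimal strategy does \emph{not} execute $t_k$, then $v$ serves at most as an interior routing node; collapsing $v$ onto one of its neighbours (the operation described below) removes at least one edge while keeping $|S|$ unchanged, which strictly increases the profit and contradicts optimality. Hence in this case $v$ is isolated and can simply be dropped.

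The remaining (main) case is that $t_k$ is executed. Here I would modify the solution by (i) no longer executing $t_k$, i.e. set $S(k)=0$, and (ii) deleting $v$ and all edges incident to it, rerouting every other executed transaction that used $v$ as an interior node via a \emph{re-hubbing} onto an arbitrary neighbour $w$ of $v$: each edge $vx$ is replaced by the edge $wx$, merging it with an already-present copy of $wx$ if one exists. A route that went $\dots\to x\to v\to y\to\dots$ now goes $\dots\to x\to w\to y\to\dots$ (or a shorter route if $x=w$ or $y=w$), so all surviving transactions remain feasibly routed, and the directed per-edge flows are preserved under the relabelling $vx\mapsto wx$; when two parallel copies of an edge get merged, setting $\cleft$ and $\cright$ of the merged edge to the sums of the corresponding one-sided capacities keeps the capacity constraint (the second inequality of Problem~\ref{def:capitalcon}) valid and does not increase $\sum_e \cleft(e)+\cright(e)$. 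The net effect is that $|S|$ drops by $1$ (the lost fee of $t_k$, which is $1-\epsilon\to 1$), while $|E|$ drops by at least $1$ (the edge of $t_k$'s route incident to $v$ is removed and never recreated); hence $|S|-|E|$ does not decrease and $\sum_e(\cleft(e)+\cright(e))+|E|$ does not increase. So the modified solution is again optimal and omits $v$, and iterating removes every such node.

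The crux is step (ii): one has to be sure the re-hubbing does not silently inflate the capital. This is exactly why the argument needs the two ingredients above — that the per-edge flows are literally preserved by the relabelling (so no surviving edge needs more capital than before), and that merging two parallel edges only adds their one-sided capacities (so the capital sum cannot grow even as the edge count shrinks) — together with the bookkeeping that the unique edge of $t_k$'s route incident to $v$ is genuinely deleted, which is what pays for the lost fee of $t_k$.
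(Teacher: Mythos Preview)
Your argument is essentially the paper's: contract the offending node $v$ onto an arbitrary neighbour $w$, turn every edge $vx$ into $wx$ (summing the one-sided capacities when an edge $wx$ already exists), and observe that at most one transaction fee $1-\epsilon$ is lost while at least one channel-opening cost $1$ is saved, so the modified solution is at least as good (strictly better with the $1-\epsilon$ accounting the paper uses). One small bookkeeping slip worth fixing: the edge that is \emph{guaranteed} to disappear is $vw$ itself (it would become a self-loop), not in general ``the edge of $t_k$'s route incident to $v$''---if that edge is $vy$ with $y\neq w$ it is merely relabelled to $wy$---but since $vw$ alone already forces $|E|$ to drop by at least one, your conclusion stands.
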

Thus, during preprocessing we can safely remove all transactions that contain a node that is only sender or receiver of a transaction in this one transaction. The time complexity of this procedure is linear in the number of transactions.
\begin{lemma}\label{lem:tree}
The optimal graph is not necessarily a tree (or forest).
\end{lemma}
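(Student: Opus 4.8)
The statement to prove is Lemma~\ref{lem:tree}: the optimal graph is not necessarily a tree (or forest). This is a negative/existential claim, so the natural proof strategy is to exhibit a concrete counterexample — a small transaction sequence whose optimal network (maximizing profit subject to the capital constraints) provably must contain a cycle.

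\medskip

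\noindent\textbf{Proof proposal.}

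The plan is to construct an explicit instance with a handful of nodes and transactions where any tree (or forest) solution is strictly dominated by a solution containing a cycle. First I would look for the smallest such instance: a triangle on three nodes $a,b,c$ is the obvious candidate for the cycle. The idea is to pick transactions that ``flow around'' the triangle — for instance $a\to b$, $b\to c$, $c\to a$, possibly with repetitions or suitable values — so that routing them around the cycle lets each edge carry transactions in a balanced way, keeping the required capital on each side small, whereas any spanning tree on $\{a,b,c\}$ (which has only two edges) is forced to route some transaction along a two-hop path, piling up capital imbalance on the middle edge and thus requiring more total capital. I would choose the values so that with the triangle the capital constraint (constraint~3 of Problem~\ref{def:capitalcon}, i.e.\ $\sum_e \cleft(e)+\cright(e)+|E|\le C$) can be met at a budget that makes all transactions acceptable and hence yields maximum profit $|S|-|E|$, while with two edges only the same budget $C$ cannot simultaneously satisfy constraint~2 for all transactions, forcing some transaction to be dropped and strictly lowering the profit. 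One must also double-check that adding the third edge (cost $1$ in the capital budget, and $1$ in $|E|$ which reduces $|S|-|E|$) still pays off: the extra accepted transaction(s) must outweigh the extra edge cost, so the instance should have enough transactions per node that dropping one is a net loss.

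To make the argument airtight I would (i) specify the instance precisely — nodes, ordered transactions, values — keeping in mind Lemma~\ref{lem:twotx}, so every node must send/receive at least two transactions, otherwise the ``optimal'' graph is not well-defined for comparison; (ii) exhibit the triangle solution with an explicit capital assignment $\cleft,\cright$ and routing $S_e$ satisfying all three constraints and accepting every transaction; (iii) argue that no forest on the same node set achieves the same profit: since a forest on $k$ nodes has at most $k-1$ edges, enumerate the (few) relevant tree topologies and show each either cannot route all transactions within budget $C$ or, if it drops transactions, yields strictly smaller $|S|-|E|$ than the triangle. Because the instance is tiny, step (iii) is a finite case check rather than a general combinatorial argument.

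The main obstacle I anticipate is the trade-off bookkeeping in step (iii): the triangle pays an extra $+1$ in $|E|$ (hurting profit) and an extra $+1$ in the capital sum, so the counterexample only works if routing around the cycle genuinely saves more capital than a two-hop tree route costs, \emph{and} the transaction fees recovered justify the extra edge. Getting the numbers to line up — so that the triangle is feasible at budget $C$ and strictly better, while every forest is either infeasible at $C$ or strictly worse in profit — requires a careful choice of transaction values and order; the order matters because constraint~2 is a running-prefix constraint, so I would arrange the around-the-cycle transactions to alternate directions on each edge and keep every prefix sum small. Once a working instance is found, the verification is routine arithmetic.
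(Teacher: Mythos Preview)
Your high-level plan---exhibit a concrete instance and compare the cycle solution against every forest---is exactly what the paper does. But the particular instance you sketch does \emph{not} work, and the intuition you give for it is inverted. Take the triangle with cyclic flow $a\to b$, $b\to c$, $c\to a$ (each repeated $k$ times, in that order). On the triangle each edge carries $k$ units in one direction, so the edge capital is $3k$, opening cost $3$, profit $3k-3$. On the path $a$--$b$--$c$ the $c\to a$ batch travels $c\to b\to a$; by the time it arrives, the earlier $a\to b$ and $b\to c$ batches have already pushed $k$ to the $b$-side of $(a,b)$ and the $c$-side of $(b,c)$, so the return flow is free. The path needs only $2k$ edge capital, opening cost $2$, profit $3k-2$: strictly better on both counts. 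The two-hop detour does not ``pile up imbalance''; it cancels it.

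What makes the paper's construction work is a different transaction pattern. On a $6$-cycle $v_1v_2\cdots v_6$, the even-indexed nodes only \emph{send} (each to its two neighbours) and the odd-indexed nodes only \emph{receive}. With $a$ copies of each of the six transaction types, the cycle needs exactly $a$ per edge, so capital $6a+6$ and profit $6a-6$. The crucial observation is about \emph{leaves}: in any spanning tree, a leaf $v$ has a single incident edge, and since $v$ is either a pure sender or a pure receiver, both of its batches cross that edge in the same direction, forcing at least $2a$ capital there. Summing over the $\ell\ge2$ leaves and the remaining edges gives total edge capital at least $2a\ell+(a-1)(5-\ell)\ge 7a-5>6a+6$, so no spanning tree fits the budget. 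This alternating sender/receiver structure requires an even cycle; a triangle cannot host it, which is why your three-node attempt was doomed from the start. If you want a smaller example, a $4$-cycle with the same alternating pattern should also work, but you will still need the leaf argument rather than the ``balanced cyclic flow'' intuition.
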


Due to the complexity of the problem we focus on a single channel. It turns out that even for this degenerate case, the problem is far from trivial.

\subsection{Single Channel}
We now focus on a single channel. We prove that even in this case the problem of choosing the transactions that maximize the profit given capital assignments is NP-hard and present an FPTAS.

Specifically, we are given a sequence of transactions on a single edge of a network and their values, the capital assignment on the edge and a target profit. Our goal is to decide whether we can execute at least as many transactions as the given target profit while respecting the capital constraints. Since the number of edges is fixed and equal to $1$ the profit now is the number of executed transactions (Problem \ref{def:SingleEdge}). The problem is equivalent to a variant of the $0/1$ knapsack problem where each transaction represents an item. Each item has profit $1$ and either positive or negative size (values). The capacity of the knapsack is represented by the capital assignments and the goal is to maximize the profit while respecting the capacity.

\begin{problem}[Fixed Weight Subset Sum (\textsf{FWSS})]
Given a set of non-negative integers $U = \{a_1, a_2, \ldots, a_n \}$, and non-negative integers $A$ and $l$, is there a non-empty subset $U' \subseteq U$ such that $|U'|=l$ and $\sum_{a_i \in U'} a_i = A$? 
\end{problem}
\begin{lemma}\label{lem:FWSS}
\textsf{FWSS} is NP-hard.
\end{lemma}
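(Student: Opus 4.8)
The plan is to reduce from a known NP-hard problem in which both a cardinality constraint and an exact-sum constraint appear naturally. The cleanest choice is \textsf{Subset Sum} (or equivalently \textsf{Partition}): given non-negative integers $b_1,\dots,b_m$ and a target $B$, decide whether some subset sums to exactly $B$. The obstacle in reducing to \textsf{FWSS} is that \textsf{FWSS} additionally demands that the chosen subset has a \emph{prescribed} size $l$, whereas a \textsf{Subset Sum} solution may use any number of elements. So the main work is a padding construction that makes the size constraint vacuous.

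First I would take an instance $b_1,\dots,b_m, B$ of \textsf{Subset Sum}. I would introduce a large scaling factor, say $M = m\cdot(\max_i b_i) + 1$ (anything strictly larger than any achievable partial sum), and define new elements $a_i = b_i$ for $1 \le i \le m$. Then I would append $m$ ``dummy'' elements all equal to $0$: $a_{m+1} = \dots = a_{2m} = 0$. Finally I set the target $A = B$ and the required cardinality $l = m$. The point of the zero-valued padding is that any subset $U''$ of the original $a_1,\dots,a_m$ with $\sum_{a_i \in U''} a_i = B$ can be extended to a subset of size exactly $m$ by throwing in as many zero elements as needed (since $|U''| \le m$, and there are $m$ zeros available), without changing the sum. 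Conversely, any size-$m$ subset summing to $B$ restricts to a subset of the original elements summing to $B$ (the zeros contribute nothing), which solves \textsf{Subset Sum}. One must also handle the non-emptiness clause of \textsf{FWSS}: since $l = m \ge 1$ whenever the \textsf{Subset Sum} instance is non-trivial, $U'$ is automatically non-empty; degenerate instances with $m=0$ can be dispensed with directly.

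I would then argue correctness of the reduction in both directions along the lines above, and observe it is clearly polynomial: the new instance has $2m$ integers, each of size polynomial in the input. Care is needed on a couple of technical points: the scaling factor $M$ is actually not needed in this simplest version (since I kept $a_i = b_i$ and the zeros do not interfere), so I would drop it to keep the construction minimal; and I should double-check the edge case where $B > \sum b_i$, which is trivially a ``no'' instance on both sides. The main obstacle, as noted, is purely the cardinality constraint, and the zero-padding trick neutralises it; with that in place the reduction and its proof are routine.

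As an aside worth noting for the subsequent development, this reduction also shows \textsf{FWSS} is NP-hard in the weak sense (the numbers can be large), which is exactly what is needed to conclude NP-hardness of the single-channel profit maximisation problem (Problem~\ref{def:SingleEdge}) and is consistent with the existence of an FPTAS claimed afterward; a strongly-NP-hard variant would preclude an FPTAS, so it is important that the reduction be from (weakly NP-hard) \textsf{Subset Sum} rather than from a strongly NP-hard problem.
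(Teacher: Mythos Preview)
Your reduction is correct (modulo two small technicalities below) and takes a genuinely different route from the paper's. The paper reduces \textsf{Subset Sum} to \textsf{FWSS} by a \emph{Turing} reduction: given an \textsf{SS} instance on $n$ elements with target $A$, it simply creates $n$ instances of \textsf{FWSS}, one for each possible cardinality $l\in\{1,\dots,n\}$, and answers ``yes'' iff any one of them does. Your construction instead gives a single \emph{many-one} (Karp) reduction by padding with zeros so that the cardinality constraint $l=m$ becomes slack. Your approach yields the formally stronger statement (NP-hardness under many-one rather than Turing reductions) at the cost of a slightly more delicate correctness argument; the paper's approach is essentially the one-line observation that \textsf{SS} is the disjunction of \textsf{FWSS} over all $l$. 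Two caveats you should patch: (i) \textsf{FWSS} is stated over a \emph{set} $U$, so $m$ identical zeros collapse to one element---either read ``set'' as ``multiset'' (which the paper evidently does) or scale the $b_i$ and use distinct tiny padding values; (ii) the edge case $B=0$ with all $b_i>0$ breaks your equivalence (the padded \textsf{FWSS} instance is ``yes'' via the zeros, but \textsf{SS} is ``no''), so restrict to $B>0$, which is harmless since \textsf{Subset Sum} remains NP-hard there.
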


\begin{theorem}
Problem \ref{def:SingleEdge} is NP-hard.
\end{theorem}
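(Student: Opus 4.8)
The plan is to reduce from \textsf{FWSS}, which is NP-hard by Lemma~\ref{lem:FWSS}. Given an instance of \textsf{FWSS} with integers $U=\{a_1,\dots,a_n\}$ and targets $A$ and $l$, I would construct an instance of Problem~\ref{def:SingleEdge} whose transactions encode the $a_i$ and whose capital and profit thresholds force the solver to pick exactly $l$ of them summing to exactly $A$. The natural first idea is to make the first $n$ transactions be right-to-left moves (say, from $r$ to $s$) with values $a_1,\dots,a_n$, so that accepting a subset $U'$ pushes $\sum_{a_i\in U'} a_i$ units of capital to $\cleft$; then append one ``check'' transaction of value $A$ in the left-to-right direction. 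Setting $\cleft(e)=A$ forces $\sum_{a_i\in U'} a_i\le A$, while forcing the final transaction to be accepted (via the profit threshold) forces $\sum_{a_i\in U'} a_i\ge A$, pinning the sum to exactly $A$.

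The remaining difficulty is enforcing the cardinality constraint $|U'|=l$, since Problem~\ref{def:SingleEdge} only counts the number of accepted transactions, not which ones. Here I would exploit the fact that the profit target $P$ in Problem~\ref{def:SingleEdge} is a lower bound on $|S|$: if I arrange the gadget so that the maximum number of acceptable transactions is $l$ plus the number of mandatory ``check'' transactions, then meeting the profit threshold $P$ is possible only by accepting exactly $l$ of the $a_i$-transactions. To make ``at most $l$'' hold I would pad with auxiliary transactions and capital bounds; one clean way is to interleave the $n$ value-transactions with blocking transactions, or to choose the $a_i$-direction and $\cright(e)$ so that accepting more than $l$ of them (in any order) would at some prefix violate one of the capital bounds. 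A cleaner alternative, avoiding subtle ordering arguments, is to scale: replace each $a_i$ by $a_i' = a_i\cdot M + 1$ for a large $M$ (e.g. $M = n\cdot(\max_i a_i + 1) + 1$), and the target by $A' = A\cdot M + l$. Then any subset summing to $A'$ must use exactly $l$ elements (the ``$+1$'' parts force the count, since the residue of $A'$ modulo $M$ is $l<M$ and at most $n$ small units accumulate), and the scaled elements sum to exactly $A$. This removes the need for a separate cardinality gadget entirely: I set $\cleft(e)=A'$, let the $n$ transactions be right-to-left with values $a_i'$, append a left-to-right transaction of value $A'$, set $\cright(e)=0$ (or any value that does not interfere), and set $P=1$ appropriately--actually $P$ must be large enough to force both the subset-sum structure and acceptance of the check transaction, so I would set the check transaction's feasibility to hinge on the prefix sum equalling exactly $A'$ and make $P=$ (number of forced transactions).

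Concretely, I expect the final reduction to look like this: transactions $t_1,\dots,t_n$ with $t_i=(r,s,a_i')$, then $t_{n+1}=(s,r,A')$; capital $\cleft(e)=A'$, $\cright(e)=0$; profit target $P$ chosen so that a strategy of value $\ge P$ must accept $t_{n+1}$, which requires the accepted prefix among $t_1,\dots,t_n$ to sum to exactly $A'$ (the lower bound $0\le\sum$ is automatic, the upper bound $\le A'$ comes from $\cleft$, and $t_{n+1}$ needs the left side to hold at least $A'$). By the scaling argument such a subset corresponds exactly to an $l$-element subset of $U$ summing to $A$. Conversely, any yes-instance of \textsf{FWSS} yields such a strategy. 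Both directions are then routine; the construction is clearly polynomial.

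The main obstacle I anticipate is getting the cardinality constraint to transfer without a messy ad hoc gadget, and making sure the profit threshold $P$ does exactly the work of ``force the check transaction and force exactly $l$ value-transactions'' rather than accidentally allowing spurious strategies (e.g. accepting some value-transactions and \emph{not} the check transaction, which would need $P$ to be unattainable that way). The scaling trick resolves the cardinality issue cleanly, so the real care is in the bookkeeping of $P$ and the capital bounds; I would double-check the edge cases where $l=0$ or $A=0$, and confirm that no reordering of accepted transactions can satisfy the prefix constraints unless the accepted $a_i'$-subset is exactly as intended.
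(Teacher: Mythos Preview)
Your scaling idea $a_i'=Ma_i+1$, $A'=MA+l$ is a legitimate way to fold the cardinality constraint into a pure subset-sum target, and it is genuinely different from the paper's additive shift $v_i=a_i+A$. Where the plan breaks down is the step you yourself flag as ``the real care'': with a \emph{single} check transaction $t_{n+1}$ there is no value of $P$ that forces it to be accepted. If $P\le n$ you can meet the threshold by accepting $P$ of the $a_i'$-transactions whose total is at most $A'$ and simply skip $t_{n+1}$; if $P=n+1$ you must accept \emph{all} $n$ value-transactions plus the check, which is far too rigid. Concretely, take $a_1=a_2=a_3=1$, $A=3$, $l=1$ (a \textsf{FWSS} no-instance). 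With your $M=7$ one gets $a_i'=8$ and $A'=22$; accepting any two of the $a_i'$-transactions gives prefix sum $-16\in[-22,0]$, so profit $2=l+1$ is achievable without touching $t_{n+1}$. Thus $P=l+1$ fails to separate yes- from no-instances, and no other choice of $P$ does either.

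What is missing is a gadget that makes the check contribute \emph{many} units of profit, so that reaching $P$ is impossible without it. The paper does exactly this: after the $n$ forward transactions $v_i=a_i+A$ it appends $n(l+1)$ small reverse transactions of value $A/n$, sets $\cright=A(l{+}1)$, $\cleft=0$, and $P=l+n(l+1)$. Hitting $P$ then forces all reverse transactions to be accepted, which in turn pins the forward sum to exactly $A(l{+}1)$ and hence the chosen $a_i$'s to sum to $A$; the additive shift simultaneously pins the count to $l$. Your scaling could be made to work along the same lines, but not with the one-check construction you wrote down; as it stands the reduction is incomplete.
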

\begin{proof} 
We will reduce Fixed Weight Subset Sum (\textsf{FWSS}) to Problem \ref{def:SingleEdge}. 

Assuming we are given an instance of the \textsf{FWSS}, we present a polynomial time transformation to an instance of Problem \ref{def:SingleEdge}. 
We first define the capital assignment on the edge $\cright(e)=A(l+1)$, $\cleft(e)=0$ and the profit $P=l+n(l+1)$. 
Then, we define the sequence of transactions as follows: $v_i=a_i+A $, $\forall 1 \leq i \leq n$ and $v_i= -A/n$, $\forall n < i \leq n(l+2)$. We will prove that there is a non-empty set that satisfies the \textsf{FWSS} problem if and only if we can choose transactions that satisfy the capital constraints and profit in the aforementioned instance. 

Assume we have a "yes" instance of the problem. Then, we have chosen at least $P=l+n(l+1)$ transactions to execute. We will show that this corresponds to choosing $l$ 
positive transactions that sum up to $A(l+1)$, thus to a solution of the \textsf{FWSS} problem. Towards contradiction, we examine the following three cases:
\begin{itemize}
\item If the number of positive transactions is less than $l$, the total profit is less than $l+n(l+1)$, since there are only $n(l+1)$ negative transactions. 
\item If the number of positive transactions is more than $l$, then we violate the capital constraints, since $\sum_{i} v_i \geq A(l+1) + \sum_{i} a_i > A(l+1)=\cright(e)$, where $i$ corresponds to the chosen transactions. 
\item Suppose the $l$ chosen transactions' values sum to more than $A(l+1)$. Then, the capital constraint is violated.
\item Suppose the $l$ chosen transactions' values sum to less than $A(l+1)$; suppose the sum is $Al+\sigma$ with some $\sigma<A$. 
Then, then negative transactions to be executed can be at most $\frac{lA}{A/n}+\frac{\sigma}{A/n}<ln+n$. Thus, the profit is strictly less than $l+ln+n$. Contradiction. 
\end{itemize}
Thus, a "yes" instance of our problem implies a "yes" instance of the \textsf{FWSS} problem.
For the other direction, we will prove that if there is no subsequence of transactions of size at least $P$ that satisfies the capital constraints, then  there is no subset of size $l$ that sums to $A$ in \textsf{FWSS}. Equivalently, we will show that if there is a subset of size $l$ that sums to $A$ in \textsf{FWSS}, then there exists a subsequence of transactions of size at least $P$ that satisfies the capital constraints. Suppose there is a non-empty set $U'\subseteq U$ such that $|U'|=l$ and $\sum_{a_i \in U'} a_i = A$. Then we can execute the $l$ transactions that correspond to the chosen $a_i$'s with exactly the $\cright(e)$ capital, which will be transfered on $\cleft(e)=A(l+1)$. Then, we can execute all the negative transactions since they are $n(l+1)$ many with values $A/n$, thus we need $A(l+1)=\cleft(e)$ capital. Therefore, we can execute $P=l+n(l+1)$ transactions, achieving the required profit while satisfying the capital constraints.
\end{proof}
Both \textsf{FWSS} and Problem  \ref{def:SingleEdge} are also polynomially verifiable, hence NP-complete.
\newline
The classic dynamic programming approach that typically yields a polynomial time algorithm when profits are fixed is not efficient since in this variation we cannot optimize using the minimum value at each step due to negative values. Instead, we present a fully polynomial time approximation scheme (FPTAS).

\begin{algorithm}[ht]
 \KwData{number of transactions $n$, values of the sequence of transactions $v_i \in \mathbb{R}, \forall 1 \leq i \leq n$, capital $C$, approximation factor $\epsilon$.}
 \KwResult{binary vector $S=\{0,1\}^n$ that indicates which transactions to execute.}
 Let $K= \frac{\epsilon V}{n}$, where $V=\max_{1 \leq i \leq n}v_i$\;
 For all transactions $1 \leq i \leq n$ define ${v'}_i= \lfloor \frac{v_i}{K} \rfloor$\;
 Let $T(i,j)=0$, for all $1 \leq i \leq n$ and $1 \leq j \leq \frac{n^2}{\epsilon}$\;
 \For{$i=1$ to $n$}{
    \For{$j=1$ to $\frac{n^2}{\epsilon}$}{
    \begin{equation*} 
	T(i,j)  =
	\begin{cases} 
	\max \{T(i-1, j), 1+T(i-1,j-{v'}_i) \} & ,if \enspace   \frac{C}{K} \geq j-{v'}_i> 0 \\
	T(i-1, j) & ,else\\ 
	\end{cases}
	\end{equation*}
    Store for every $T(i,j)$ a $n$-binary vector $S_{i,j}$ that has value $1$ in the $k$-th position if the $k$-th transactions is chosen to be executed\;
    }
    }
Return vector $S_{i,j}$ for the maximum $T(i,j)$ such that $\sum_{k=1}^{n} S_{i,j}(k) \cdot v_k \leq C$\;
 \caption{\tt MaxProfit}
 \label{alg:maxprofit}
\end{algorithm}
\newpage

\begin{theorem}
Algorithm {\tt MaxProfit} is a fully polynomial time approximation scheme for Problem \ref{def:SingleEdge}.
\end{theorem}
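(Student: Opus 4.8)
The statement packages two claims: the algorithm runs in time polynomial in $n$ and $1/\epsilon$, and for every fixed $\epsilon>0$ the strategy it returns executes at least a $(1-\epsilon)$ fraction of the optimal number of transactions while respecting the capital $C$. The overall shape is the classical scale-and-round FPTAS for knapsack, but with a twist I would flag up front: here all transactions carry the same unit profit, so there is nothing to scale on the profit side; instead the algorithm scales the \emph{values} $v_i$, which play the role of item weights. This is precisely why the textbook argument does not transfer verbatim --- scaling weights can turn a feasible selection infeasible (and conversely), and the capital constraint has to hold for \emph{every prefix} of the executed sequence, not merely for the total.

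For the running time I would first argue that the column range $1\le j\le n^2/\epsilon$ is wide enough to capture every reachable state: after discarding transactions that already violate the capital on their own, $|v'_i|=|\lfloor v_i/K\rfloor|\le V/K=n/\epsilon$, so any scaled channel level produced by at most $n$ transactions has absolute value at most $n^2/\epsilon$. Each of the $n\cdot n^2/\epsilon$ cells is then filled with $O(1)$ arithmetic, plus $O(n)$ work to copy the accompanying indicator vector $S_{i,j}$, for a total of $O(n^4/\epsilon)$ --- polynomial in the input size and in $1/\epsilon$, as an FPTAS requires.

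For the approximation bound I would lean on the elementary rounding inequality $0\le v_i-Kv'_i<K$, which holds for every $i$ (the floor rounds toward $-\infty$, so it is valid for negative values too). Summed along any prefix of a selected set $S$, it shows that $K$ times the scaled prefix sum and the true prefix sum differ by less than $|S|\cdot K\le nK=\epsilon V\le\epsilon C$, \emph{uniformly over all prefixes}. Two consequences follow. First, an optimal selection $O$ for capital $C$, scaled down, satisfies every scaled prefix constraint with budget $C/K$, so the table records a value of at least $|O|=\mathrm{OPT}$. Second, any selection read off the table is, with the true values, feasible for capital at most $(1+\epsilon)C$. Taking these together with the final filter --- return the best $T(i,j)$ whose true value-sum is $\le C$ --- yields a genuinely $C$-feasible strategy of cardinality at least $(1-\epsilon)\mathrm{OPT}$: from a table solution of cardinality $\mathrm{OPT}$ overshooting $C$ by at most $\epsilon C$, drop the $\lceil\epsilon\,\mathrm{OPT}\rceil$ largest executed transactions to restore feasibility while losing only an $\epsilon$ fraction of the profit (the regime of small $\mathrm{OPT}$ being handled by running the exact polynomial-time DP on the unscaled values).

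The real obstacle, I expect, is exactly this last reconciliation. Because the capital bound constrains the entire trajectory of the channel level and the values are signed, deleting a transaction from the middle of the executed sequence shifts every later prefix sum, so one must verify that the surgery keeps the whole trajectory inside $[-\cleft(e),\cright(e)]$; when $\cleft(e)=0$ (as in our reduction) this is easy --- removing positive transactions only lowers later levels --- but the genuinely two-sided case needs care in choosing which transactions to drop. A cleaner alternative, which I would present first and then refine, is to phrase the guarantee as a resource-augmentation FPTAS: at least $(1-\epsilon)\mathrm{OPT}$ transactions using capital at most $(1+\epsilon)C$. That form follows immediately from the two inequalities above, and the exact-feasibility statement is then a corollary obtained via the deletion argument.
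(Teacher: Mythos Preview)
Your proposal diverges sharply from the paper's own argument, and the divergence is illuminating. The paper's proof is a near-verbatim transplant of the textbook knapsack FPTAS: it asserts the running time is $O(n^3/\epsilon)$, then writes the chain $prof(S)\ge K\,prof'(O)\ge prof(O)-nK=prof(O)-\epsilon V\ge(1-\epsilon)\,prof(O)$, appealing to $prof(O)\ge V$. But here $prof(\cdot)$ is the \emph{cardinality} of the selected set, so the chain is treating the objective simultaneously as a count and as a scaled value-sum, and the last inequality compares a number of transactions to a maximum transaction value. The paper never mentions the prefix nature of the capital constraint, nor does it ask whether rounding preserves feasibility in either direction. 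You put your finger on exactly this: with unit profits there is nothing to scale on the profit axis, the algorithm is scaling \emph{weights}, and weight-scaling can break feasibility --- more so when the constraint must hold at every prefix and the weights are signed.

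Your route via the uniform prefix bound $\bigl|K\sum_{i\le j}v'_i-\sum_{i\le j}v_i\bigr|<nK=\epsilon V$ is the right one, and the resource-augmentation statement (at least $\mathrm{OPT}$ in the table under budget $C/K$, hence at least $\mathrm{OPT}$ true transactions under budget $(1+\epsilon)C$) is clean. The genuine gap you yourself name --- converting this to an exactly $C$-feasible $(1-\epsilon)\mathrm{OPT}$ solution by deleting a few transactions --- is real and not closed by what you wrote: in the two-sided constraint $-\cleft(e)\le\cdot\le\cright(e)$, removing a positive transaction lowers all later prefix sums and can push some below $-\cleft(e)$, so ``drop the largest'' is not obviously safe. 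The paper does not close this gap either; it simply never raises the issue. In short: your approach is different from, and more careful than, the paper's, but neither your sketch nor the paper's proof fully establishes the exact-feasibility FPTAS as stated; your resource-augmentation corollary is the part that is actually proved.
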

\begin{proof}
The running time of the algorithm is $O(\frac{n^3}{\epsilon})$, which is polynomial in both $n$ and $\frac{1}{\epsilon}$.\\
We will prove that the profit of the output of algorithm {\tt MaxProfit} is at least $(1-\epsilon)$ times the optimal. We denote by $S$ the set of transactions returned by the algorithm, $O$ the set returning the optimal profit and $prof(X)$ the profit from the set of transactions $X$. Since we scaled down by $K$ and then rounded down, for every transaction $i$ we have that $K{v'}_i \leq v_i$. Therefore, the optimal set's profit can decrease at most $nK$, $prof(O)-prof'(O)K \leq nK$. The dynamic program returns the optimal set for the scaled instance. Thus, $prof(S) \geq prof'(O)K \geq prof(O) -nK = prof(O) - \epsilon V \geq (1-\epsilon) prof(O)$, since $prof(O) \geq V$.
\end{proof}
%
%
\textbf{\textsf{Scaling to many channels.}} 
Unfortunately, even when the graph is a tree, algorithm \ref{alg:maxprofit} does not scale efficiently. Creating an $m$-dimensional tensor for the dynamic program, where $m$ are the edges of the tree, has time complexity $O(C^mn)$ where $C$ is the maximum capital from all edges. Even if we bound the capital by a polynomial on $n$ the algorithm remains exponential due to the number of  edges on the exponent. In the general case where the graph could contain cycles, the problem becomes even more complex. Now, we need to additionally consider all possible routes for each transaction; this adds an exponential factor on the running time of the algorithm. \\
Since Problem \ref{def:capitalcon} is complex, we study special cases that might be useful in practice and provide an insight to the general problem.
%
%
\subsection{Channel Design for Maximum Profit} 
In this section, our goal is to find the minimum capital for which we can achieve maximum profit, i.e., execute all profitable transactions (Problem \ref{def:maxprofit}). 
At first, we note some simple observations for the graph structure. Then, we prove that any star graph is a $2$-approximate solution with respect to the capital, but even the ``best'' star is not an optimal solution. Last, we prove the problem is NP-hard when there are graph restrictions. \\
Throughout this section, we refer to the optimal solution of Problem \ref{def:maxprofit} as the \emph{optimal network for maximum profit}. 

\begin{lemma}\label{lem:nocycles}
When  the capital is unlimited, the optimal network for maximum profit does not contain cycles.
\end{lemma}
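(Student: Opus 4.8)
The plan is to reduce the statement to a one-line exchange argument by first pinning down exactly what ``unlimited capital'' buys us. With no bound on $C$, the non-negativity constraint (inequality~2 of Problem~\ref{def:capitalcon}) and the budget constraint (inequality~3) are never binding: a transaction $t_i=(s_i,r_i,v_i)$ can be executed in a graph $G$ \emph{if and only if} $s_i$ and $r_i$ lie in the same connected component of $G$. Indeed, if they do, route $t_i$ along any $s_i$--$r_i$ path and put enough capital on each side of each edge of that path to absorb it; if they do not, no routing $S_e$ exists at all. Consequently, in the unlimited-capital regime the best achievable profit supported by $G$ is simply (the number of transactions whose two endpoints are co-located in $G$) minus $|E|$, and it depends on $G$ only through its partition into connected components.

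Next I would argue by contradiction. Suppose an optimal network for maximum profit $G$ contains a cycle, and pick any edge $e$ on that cycle. Deleting $e$ yields $G'$ with $|E(G')|=|E(G)|-1$; since $e$ lay on a cycle, $G'$ has exactly the same connected components as $G$ (the rest of the cycle still joins the endpoints of $e$, so every path that used $e$ can be rerouted). By the observation above, every transaction executable in $G$ is still executable in $G'$, so the accepted set $S$ that $G$ used can be reused verbatim in $G'$, together with a concrete re-routing and capital re-assignment of the transactions that previously traversed $e$ (possible because their endpoints remain connected in $G'$ and capital is unbounded). Hence $G'$ attains profit $|S|-|E(G)|+1$, strictly more than $G$, contradicting the profit-optimality of $G$ --- and therefore also contradicting that $G$ is the optimal network for maximum profit of Problem~\ref{def:maxprofit}, independently of how ties on capital are broken. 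Thus the optimal network is acyclic, i.e., a forest.

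This is essentially routine, and I do not expect a genuine obstacle: the only point deserving a sentence of care is re-establishing feasibility after the deletion of $e$, which is immediate in the unlimited-capital setting. The lemma should be read as a warm-up observation that isolates the effect of the per-channel opening cost: without it (and without a capital bound) there would be no reason to avoid redundant edges, and once it is present every superfluous cycle edge can be shed for a free unit of profit.
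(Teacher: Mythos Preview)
Your argument is correct and matches the paper's proof essentially line for line: assume a cycle, delete an arbitrary cycle edge, reroute the affected transactions along the remaining path (which is always feasible when capital is unbounded), and observe that profit goes up by the saved opening cost, contradicting optimality. The extra paragraph you spend justifying why unlimited capital makes feasibility depend only on connectivity is more explicit than the paper's one-line appeal to rerouting, but it is the same idea.
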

\begin{lemma}\label{lem:maxprofit}
When the capital is unlimited, there exists an algorithm, with time complexity $\Theta(n)$, where $n$ denotes the number of transactions, 
that returns the optimal network for maximum profit. 
\end{lemma}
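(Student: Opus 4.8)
The goal is to exhibit a linear-time algorithm that, given the transaction sequence and unbounded capital, outputs a network achieving maximum profit. Since capital is free, the only cost is the number of edges, so "maximum profit" means: execute every transaction whose fee $1-\epsilon$ exceeds its marginal contribution to edge cost. By Lemma \ref{lem:nocycles} we already know an optimal such network is acyclic, i.e. a forest. The plan is therefore to (i) characterize which transactions are worth executing, (ii) show that the optimal forest is essentially forced once we decide the "active" vertex set, and (iii) verify the whole thing can be assembled in $\Theta(n)$ time.

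First I would argue about connectivity. Group the transactions by the connected components they induce on the vertex set: put an edge between $s_i$ and $r_i$ for every transaction $t_i$, and take connected components of the resulting multigraph. Within a component containing $k$ vertices, if we decide to serve all its transactions we must open at least $k-1$ edges (any connected spanning subgraph on $k$ vertices has $\ge k-1$ edges), and $k-1$ edges suffice: pick any spanning tree of the component, assign unlimited capital to each side of each edge, and route each transaction along the unique tree path — the capital constraint is vacuous because capital is unbounded. So for a component inducing $m$ transactions on $k$ vertices, serving all of them yields profit $m(1-\epsilon) - (k-1)$. The alternative is to serve only a sub-multiset of transactions; but dropping transactions can only lower $m$ while $k-1$ (the cost of the cheapest connected subgraph on the still-active vertices) does not decrease faster than $m$ does in the relevant regime — more precisely, each dropped transaction loses $1-\epsilon$ in fees and can save at most... here is where one must be slightly careful.

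The main obstacle, then, is the accounting at the boundary: a transaction might be the only one touching some vertex, in which case serving it forces an extra edge costing $1$ while earning only $1-\epsilon<1$, so it should be dropped. But Lemma \ref{lem:twotx} already tells us the optimal network contains no vertex incident to fewer than two transactions, so after the linear-time preprocessing described there, every remaining vertex is touched by $\ge 2$ transactions. I would show this preprocessing must be iterated (removing a transaction can reduce another vertex's degree), but it still terminates in $O(n)$ total work since each removal is charged to a distinct transaction. After preprocessing, in each surviving component with $k$ vertices and $m \ge k$ transactions (in fact $m$ is at least, roughly, $k$, since degrees sum to $2m \ge 2k$), serving everything gives $m(1-\epsilon)-(k-1) > 0$ and beats any strict subset: removing even one transaction costs $1-\epsilon$ in fees and, since no vertex has degree $<2$, cannot immediately delete a vertex, hence saves nothing in edge cost — so the full service is optimal for that component, and the components are independent.

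Putting it together: the algorithm is (1) iteratively discard every transaction incident to a degree-$<2$ vertex (linear total time using a queue of low-degree vertices, as in Lemma \ref{lem:twotx}); (2) compute connected components of the remaining transaction graph (linear via BFS/DFS or union-find); (3) output, for each component, an arbitrary spanning tree with unbounded capital on every edge side, serving all remaining transactions routed along tree paths. Correctness follows from the per-component optimality argument above together with Lemma \ref{lem:nocycles}; the running time is $\Theta(n)$ since every step is a single linear pass. I would close by remarking that the output profit equals $\sum_{\text{components}} \big(m_c(1-\epsilon) - (k_c-1)\big)$, which matches the upper bound obtained by noting any feasible network must pay $\ge k_c - 1$ edges per component it serves and earns $\le (1-\epsilon)$ per transaction.
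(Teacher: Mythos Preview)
Your approach is correct and in fact more careful than the paper's. The paper gives a different (and shorter) algorithm: it scans all sender--receiver pairs, and whenever a pair is joined by at least two transactions and the edge would not close a cycle, it adds that edge; it then asserts, citing Lemma~\ref{lem:twotx}, that this forest executes every profitable transaction with a minimum number of edges.

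The two routes differ in how they decide which transactions to drop. The paper thresholds on \emph{pairs} (a pair with a single transaction never gets a direct edge), whereas you threshold on \emph{vertices} via an iterated $2$-core peeling. Your criterion is the correct one: on the sequence $(a,b),(a,b),(b,c),(c,d),(d,a)$ every vertex is incident to at least two transactions, yet only the pair $\{a,b\}$ has count~$\ge 2$; the paper's rule builds the single edge $(a,b)$ for profit $2(1-\epsilon)-1$, while any spanning tree on $\{a,b,c,d\}$ executes all five transactions for profit $5(1-\epsilon)-3$, which is strictly larger for small~$\epsilon$. Your $2$-core step keeps all four vertices and all five transactions and hence handles this instance correctly, so your decomposition buys you genuine correctness on inputs where singleton pairs sit between well-connected vertices.

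One place where your write-up is a bit quick is the optimality of ``serve everything in each surviving component.'' The single-deletion argument (``removing one transaction cannot delete a vertex, hence saves no edge'') does not by itself rule out removing several transactions simultaneously. A clean way to close this is to note that, with $\epsilon\to 0$, the profit of serving a transaction multiset $T$ via a spanning forest equals the cyclomatic number $|T|-|V(T)|+c(T)$, which is monotone non-decreasing under adding vertices and edges; hence the full $2$-core maximises it over every sub-$2$-core, and iterated Lemma~\ref{lem:twotx} guarantees the optimum is such a sub-$2$-core. With that addition your plan is complete, and the $\Theta(n)$ running time for peeling plus BFS/DFS is standard.
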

\begin{lemma} \label{cc}
The optimal network for maximum profit is not necessarily a connected graph.
\end{lemma}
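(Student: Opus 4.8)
The plan is to exhibit a concrete transaction sequence whose optimal network for maximum profit (minimizing capital while executing all profitable transactions) is disconnected, since this is a ``not necessarily'' statement and a single witnessing instance suffices. First I would take two completely independent groups of transactions that share no nodes: say transactions among nodes $\{a_1,a_2,a_3\}$ forming one cluster and transactions among nodes $\{b_1,b_2,b_3\}$ forming another cluster, with each cluster chosen (by Lemma~\ref{lem:twotx}) so that every node sends or receives at least two transactions, so no transactions get pruned and all are profitable.

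The key observation is that, by Lemma~\ref{lem:maxprofit} (or directly from Lemma~\ref{lem:nocycles} plus the fact that capital is additive over edges and independent across vertex-disjoint components), the optimal solution for the union of two vertex-disjoint transaction sets is just the disjoint union of the optimal solutions for each set: there is no edge one could add between the two clusters that carries any transaction, so adding such an edge only increases $|E|$ and hence the capital term $\sum_e(\cleft(e)+\cright(e))+|E|$ while providing no benefit (it cannot increase profit, which is already maximal, and routing never needs to leave a component). Hence the optimal network has at least two connected components and is disconnected. I would state this argument in general (vertex-disjoint instances combine by disjoint union) and then instantiate it with the smallest explicit example to make the claim concrete.

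The main obstacle — really the only subtlety — is making sure the chosen instance genuinely forces disconnection rather than merely allowing it: I must argue that \emph{no} optimal solution is connected, not just that some disconnected one is optimal. This follows because any connected spanning structure on $\{a_1,a_2,a_3,b_1,b_2,b_3\}$ needs strictly more edges (hence strictly more cost, since opening a channel costs at least $1$ and both sides carry non-negative, here effectively zero-benefit capital) than the two separate optimal sub-networks, so a connected solution is strictly suboptimal. A minor point to check is that the ``all profitable transactions'' requirement does not secretly couple the two clusters — but since profitability of a transaction depends only on whether executing it nets a positive fee and each cluster's transactions can all be executed within its own component, the requirement decomposes cleanly. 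I would keep the write-up short: state the disjoint-union principle, give the example, and note the strict edge-count comparison.
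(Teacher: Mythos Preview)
Your approach is correct but differs from the paper's. You build two vertex-disjoint transaction clusters with no cross-transactions, so any edge bridging the clusters is pure dead weight: it carries no transaction, strictly increases $|E|$, and hence strictly decreases profit. The paper instead uses a single $4$-node, $5$-transaction instance in which there \emph{is} a cross-transaction $t_5=(v_1,v_3,1)$ between the two pairs $\{v_1,v_2\}$ and $\{v_3,v_4\}$; the point is that executing $t_5$ forces a third edge, so any connected solution has profit at most $5(1-\epsilon)-3=2-5\epsilon$, while the disconnected two-edge graph that drops $t_5$ achieves $4(1-\epsilon)-2=2-4\epsilon$. Your disjoint-union argument is cleaner and states a reusable principle, but it needs six nodes and does not exhibit the phenomenon that an individually feasible transaction can be unprofitable; the paper's smaller example does double duty, since the very same instance is reused to prove Lemma~\ref{lem:proftx}. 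One small wording issue: you describe the objective as ``minimizing capital while executing all profitable transactions,'' but the relevant comparison here is purely on profit $|S|-|E|$ --- your edge-count argument already gives strict suboptimality of any connected graph at the profit level, before capital even enters.
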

We refer to transactions that increase the PSP's profit as \textsf{profitable transactions}. We assume all nodes participate in at least two transactions (Lemma \ref{lem:twotx}).
\begin{lemma}\label{lem:proftx}
Not all transactions are \textsf{profitable transactions}.
\end{lemma}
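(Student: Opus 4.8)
The plan is to prove the lemma by an explicit counterexample: I will construct a transaction sequence in which the profit-maximising network deliberately omits one transaction, so that by definition that transaction does not increase the PSP's profit. The key gadget is a \emph{bridge} transaction whose only role is to force an additional channel that costs more than the fee it earns. Concretely, I would take four nodes $a_1,a_2,b_1,b_2$ and five unit-value transactions: $t_1=(a_1,a_2,1)$, $t_2=(a_1,a_2,1)$, $t_3=(b_1,b_2,1)$, $t_4=(b_1,b_2,1)$, and the bridge $t_5=(a_1,b_1,1)$. First I would check that the instance is admissible after the preprocessing justified by Lemma~\ref{lem:twotx}: $a_1$ and $b_1$ each appear in three transactions and $a_2,b_2$ each in two, so no node participates in fewer than two transactions.

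Next I would compare the relevant strategies. If $t_5$ is executed, then $t_1,t_2$ put $a_1,a_2$ in one component, $t_3,t_4$ put $b_1,b_2$ in one component, and $t_5$ merges these, so the network on the executed transactions' endpoints is connected on all four nodes and hence uses at least three channels; since capital does not affect the profit, the best such strategy executes all five transactions on a spanning tree (e.g.\ the path $a_2-a_1-b_1-b_2$, routing $t_5$ directly and each of $t_1,\dots,t_4$ along an incident edge), for profit $5(1-\epsilon)-3 = 2-5\epsilon$. If instead $t_5$ is dropped, the remaining four transactions split into two disjoint single-channel components, giving profit $4(1-\epsilon)-2 = 2-4\epsilon > 2-5\epsilon$ for every $\epsilon>0$. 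A short inspection of the other options finishes the argument: executing three or fewer transactions yields at most $3(1-\epsilon)-1 = 2-3\epsilon$ minus further channel costs, and dropping one of $t_1,\dots,t_4$ rather than $t_5$ still requires a network spanning all four nodes (three channels) for only four fees; none of these beats $2-4\epsilon$ once $\epsilon$ is small. Hence every optimal network omits $t_5$, so $t_5$ does not increase the PSP's profit and is therefore not a \textsf{profitable transaction}, which proves the lemma.

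I do not expect a genuine obstacle here, since the statement is existential and the construction is small. The only two points requiring care are (i) keeping the example consistent with the ``every node lies in at least two transactions'' assumption inherited from Lemma~\ref{lem:twotx}, which is why each side of the bridge must itself be a valid two-node, two-transaction gadget, and (ii) ruling out a cleverer routing that would let $t_5$ ride over channels already present for the other transactions — this genuinely fails because $\{a_1,a_2\}$ and $\{b_1,b_2\}$ share no transaction other than $t_5$, so serving $t_5$ truly demands a dedicated channel of cost $1$ against a fee of only $1-\epsilon$.
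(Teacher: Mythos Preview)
Your proposal is correct and uses the very same instance and argument as the paper: the paper's proof invokes the sequence $t_1=(v_1,v_2,1),\ t_2=(v_1,v_2,1),\ t_3=(v_3,v_4,1),\ t_4=(v_3,v_4,1),\ t_5=(v_1,v_3,1)$ (already analysed in the proof of Lemma~\ref{cc}), notes that executing all five forces a connected graph with profit at most $2-5\epsilon$, and contrasts this with the two-channel disconnected optimum of profit $2-4\epsilon$, concluding that $t_5$ is omitted and hence not profitable. The only loose spot in your write-up is the three-transaction bound $3(1-\epsilon)-1=2-3\epsilon$, which is not below $2-4\epsilon$; tightening it to $3(1-\epsilon)-2=1-3\epsilon$ (any three of these transactions span at least two distinct node-pairs and thus need at least two channels) closes that case.
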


Despite Lemma \ref{cc}, we note that payment channels are monetary systems. As such, large companies are expected to participate in the network as highly connected nodes, ensuring that the optimal graph is one connected component. Thus, for the rest of the section we can safely assume that the optimal graph is connected.

We will now define some formal notation to prove that choosing any star as the graph to route all transactions requires at most twice the capital of the optimal graph. This immediately implies we have a $2$-approximation to Problem \ref{def:maxprofit}.

Now, suppose we can update the capital of an edge before executing each transaction. This way we can guarantee there is enough capital on all channels for each transaction execution. These updates are for free, like assigning tokens, and we use them as a stepping stone to calculate the total capital (amortized analysis).
Let us denote $c_{G}(uv,i)$ the additional capital required at the edge $(u,v)$, for transaction $t_i$ with direction from $u$ to $v$ on graph $G$. Now, we have that the total capital on graph $G$, denoted by $C_G$, is \[ C_{G}= \sum_{\forall (u,v) } \sum_{\forall i} c_{G}(uv,t_i) \]
Moreover, let $opt$ denote the optimal graph and $V$ the set of nodes involved in $opt$.\\
We will show that the capital used to route a sequence of transactions on any star that contains the same set of nodes as the optimal graph is at most twice the capital used by the optimal solution for the same sequence.

\begin{theorem}
Any star graph yields a $2$-approximate solution for Problem \ref{def:maxprofit}.
\end{theorem}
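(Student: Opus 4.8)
The plan is to argue entirely inside the amortized (token) model set up above, where the cost of a graph $G$ is $C_{G}=\sum_{(u,v)}\sum_{i}c_{G}(uv,t_i)$, and to show that on the \emph{same} transaction sequence a star already pays at most twice this. Fix an arbitrary star $T$ with centre $h$ on the vertex set $V$ of the (connected) optimal graph $opt$, and route each $t_i=(s_i,r_i,v_i)$ on $T$ along $s_i\to h\to r_i$ (a single hop if $s_i=h$ or $r_i=h$); on a star this routing is forced. First I would pin down $C_{T}$ exactly. For a leaf $w\neq h$, follow the capital on the $w$-side of the edge $(h,w)$: it goes up by $v_j$ exactly when $w$ is the receiver of $t_j$, down by $v_j$ exactly when $w$ is the sender of $t_j$, and is untouched by every transaction not incident to $w$; moreover the requirement that the $h$-side of $(h,w)$ hold enough whenever money is pushed toward $w$ is automatically implied by non-negativity of the two sides after each step. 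Hence, with $f_w(i)=\sum_{j\le i,\,w\in\{s_j,r_j\}}\sigma_w(j)\,v_j$ where $\sigma_w(j)=+1$ if $w=r_j$ and $-1$ if $w=s_j$ (and $f_w(0)=0$), the least capital that $(h,w)$ can carry is $\operatorname{span}(f_w):=\max_i f_w(i)-\min_i f_w(i)$, so that $C_{T}=\sum_{w\neq h}\operatorname{span}(f_w)$.

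Next I would bound each term $\operatorname{span}(f_w)$ by the capital $opt$ spends on the edges incident to $w$. For an edge $e$ of $opt$, let $\psi_e(i)$ be the cumulative net amount of money that has crossed $e$ after the first $i$ transactions under $opt$'s routing; since by constraint~2 of Problem~\ref{def:capitalcon} this quantity stays in an interval of width $\cleft(e)+\cright(e)$, we get $\operatorname{span}(\psi_e)\le \cleft(e)+\cright(e)$. Conservation of money at $w$ — which holds even when $w$ is an internal hop of some transactions, since such hops contribute $0$ to $f_w$ and their effects cancel across $w$'s two incident edges — gives that $-f_w(i)$, the net amount $w$ has sent by step $i$, equals $\sum_{e\ni w}$(net flow directed out of $w$ along $e$), and each summand is $\pm\psi_e(i)$. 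Combining this with subadditivity $\operatorname{span}(A+B)\le\operatorname{span}(A)+\operatorname{span}(B)$ (immediate from $\max(A+B)-\min(A+B)\le(\max A-\min A)+(\max B-\min B)$) yields $\operatorname{span}(f_w)\le\sum_{e\ni w}\operatorname{span}(\psi_e)\le\sum_{e\ni w}\bigl(\cleft(e)+\cright(e)\bigr)$. Note this step never uses that $opt$ is acyclic, only that it is connected.

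Finally I would sum over all leaves and exchange the order of summation: $C_{T}=\sum_{w\neq h}\operatorname{span}(f_w)\le\sum_{w\neq h}\sum_{e\ni w}\bigl(\cleft(e)+\cright(e)\bigr)=\sum_{e\in opt}\bigl(\cleft(e)+\cright(e)\bigr)\cdot\bigl|\{w\in e:w\neq h\}\bigr|$; since every edge of $opt$ has at most two endpoints other than $h$, this is at most $2\sum_{e\in opt}\bigl(\cleft(e)+\cright(e)\bigr)=2C_{opt}$. The channel-opening term is harmless: a connected $opt$ on $V$ has at least $|V|-1$ edges, which is exactly the number of edges of the star, so the profit is the same and only the capital needs comparing. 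I expect the real work to be the first step — verifying that pushing every transaction through $h$ never demands more capital on a hub side of an edge than $\operatorname{span}(f_w)$ accounts for — together with the conservation bookkeeping at internal nodes of $opt$ in the second step; the concluding double counting is then routine.
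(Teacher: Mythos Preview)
Your argument is correct and reaches the same bound, but by a genuinely different route from the paper. The paper's proof is a \emph{token--mirroring simulation}: it processes the transactions one by one, and whenever the optimal graph injects $c_{opt}(uv,t)$ fresh tokens on an edge $(u,v)$ of its chosen path, it places that many tokens on \emph{both} star edges $(m,u)$ and $(v,m)$; every optimal token is thus duplicated once, which immediately gives the factor~$2$, and the rest of the paper's proof is a short case check that the mirrored tokens suffice to push the transaction through the star. You instead give a \emph{static} computation: you pin down the star's optimal capital exactly as $C_T=\sum_{w\ne h}\operatorname{span}(f_w)$, invoke flow conservation at each vertex of $opt$ together with subadditivity of $\operatorname{span}$ to bound each term by the capital $opt$ commits on the edges incident to $w$, and close with an edge--endpoint double count. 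Your approach is cleaner and makes explicit that nothing about $opt$ is used beyond connectedness (in particular no acyclicity); the paper's charging scheme is more operational and dovetails naturally with the online algorithm of the following section, where tokens really are placed step by step.
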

\begin{proof} To prove the theorem, we just need to prove that for any sequence of transactions $t_1, t_2, \dots, t_n$, for any star graph $S(V)$, $C_S \leq 2C_{opt}$.
We will show that we can execute on the star graph the same sequence of transaction as the optimal solution with twice as many tokens (amortized capital).
Initially we have zero tokens on all edges on both the optimal and the star graph. Every time a new transaction $t_i$ comes the optimal solution finds a path from sender to receiver. For every edge $(u,v)$ on this path the optimal solution assigns $c_{opt}(uv,t)$ tokens. Then,  we assign on the star, $S$, $c_{opt}(uv,t)$ tokens on the edges $mu$ and $vm$, where $m$ is the central node on $S$. The only exceptions are the sender and receiver nodes, $s$ and $r$ respectively, where the tokens are initially placed on $sm$ and $mr$ to execute the transaction. Thus, for every transaction the sum of the tokens used on the star graph are twice the sum of the tokens used on the optimal solution. Therefore, the overall required capital on the star is at most twice the optimal capital, $C_S \leq 2C_{opt}$. \\
To complete our proof, we need to show we assigned in total enough tokens to execute the given sequence of transactions. When a new transaction comes from $s$ to $t$, we only need to guarantee there enough tokens on $sm$ and $mt$. 
Obviously, if a transaction needs additional tokens to be executed on the optimal graph then the aforementioned strategy guarantees the additional tokens for the star graph as well.
If there are already some tokens on the optimal graph for the sender then either he was previously an intermediate node or a receiver node. In both those cases the same amount of tokens would have been stored on $sm$ as well. With a similar argument, if there were some tokens for the last edge to reach the receiver on the optimal graph then $r$ was either an intermediate node or a sender. Again, in both those cases the same amount of tokens would have been assigned to $mr$ on the star. 
\end{proof}

\begin{lemma}\label{lem:starnotopt}
The star graph is not an optimal solution for Problem \ref{def:maxprofit}.
\end{lemma}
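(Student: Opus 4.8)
The plan is to exhibit a single, small transaction sequence on which the best star strictly loses to some non-star graph, witnessing that no star can be optimal. By the preceding theorem a star is always within a factor $2$, so it suffices to produce one instance with a constant-factor gap; the cleanest route is a concrete counterexample rather than a general lower bound argument.

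First I would look for an instance with at least four participating nodes, so that the star necessarily funnels every transaction through one center $m$ while a non-star alternative (say a path or a tree with two internal hubs) can keep some transactions ``local.'' The intuition to exploit is exactly the one isolated in the $2$-approximation proof: routing a transaction between $u$ and $v$ through the center $m$ loads capital on \emph{two} edges $um$ and $mv$, whereas in the optimal graph an adjacent pair $u,v$ needs capital on only \emph{one} edge. So I would pick a set of nodes and transactions so that in the good graph each transaction runs along a single edge (a perfect matching's worth of channels, or a short path), but in every star labelling, some transactions become $2$-hop and their capital genuinely doubles rather than being absorbed by tokens already present from other roles. Concretely, something like four nodes $a,b,c,d$ with transactions $a\!\to\! b$, $c\!\to\! d$, and then a few transactions forcing all four into one component, each of unit value and ordered so that on the path $a\!-\!b\!-\!c\!-\!d$ (or on two disjoint edges plus a connector) the per-edge capital stays at $1$ on each side, while any star on $\{a,b,c,d\}$ must pay $1$ on two edges for at least one of $a\!\to\! b$ or $c\!\to\! d$.

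The key steps, in order, are: (i) fix the node set and write down an explicit transaction list with explicit values and order; (ii) compute the optimal (or a good) non-star capital $C_{opt}$ for that list by exhibiting a graph, a capital assignment, and a routing, and verifying constraint~2 of Problem~\ref{def:capitalcon} at every prefix; (iii) argue that for \emph{every} choice of center, the resulting star requires strictly more capital — here I would use that a star has only $|V|-1$ edges through the fixed center, enumerate the (few) non-isomorphic centers, and for each show some channel's required capital exceeds what the non-star graph pays, so $\min_{\text{stars}} C_S > C_{opt}$; (iv) conclude that the star is not an optimal solution, which is the statement of Lemma~\ref{lem:starnotopt}.

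The main obstacle is step (iii): I must rule out \emph{all} stars, not just one, and the amortized-token viewpoint from the $2$-approximation proof shows stars are sometimes \emph{exactly} optimal when a node already plays multiple roles, so the instance has to be engineered so that no relabelling of the center lets token reuse hide the doubling. I expect to handle this by keeping the instance small enough that the centers fall into one or two symmetry classes, and by choosing transaction values and ordering (e.g.\ all values equal, with senders/receivers arranged so each node is a ``source'' or ``sink'' but never both for the critical edges) so that the center, whatever it is, must carry two fully-loaded edges for some pair that was adjacent and single-edge-loaded in $opt$. A secondary, purely bookkeeping obstacle is verifying the capital lower bound for the star rigorously — that the claimed capital is not merely sufficient but necessary — which I would do by pointing at a single prefix of the sequence where the signed flow across a specific star edge already forces the stated amount.
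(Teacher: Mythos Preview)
Your plan is correct and follows exactly the paper's approach: exhibit one concrete transaction sequence, display a non-star tree together with its capital, and then verify that every choice of star center needs strictly more. The paper's instance uses seven nodes and ten unit-value transactions; their optimal tree has six edges each carrying capital~$1$, so $C_{opt}=6$ matches the trivial per-edge lower bound, and the check that each of the seven possible stars needs at least~$7$ is the short case analysis you anticipated in step~(iii).
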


\noindent\textbf{\textsf{Discussion.}} 
The centralized nature of the star is quite convenient for a payment network operated by a PSP. The star alleviates the problem of participation incentives detected on decentralized payment networks; now the participants of the network can be online only when they want to execute a transaction. 
Although the star graph is not optimal, it is a good enough solution for a PSP, since the capital he needs to lock in the channels is at most twice the minimum. Thus, payment hubs are an economically viable solution for the throughput problem on cryptocurrencies. 
%
%
\subsection{Channel Design with Graph Restrictions}
An interesting variation of the problem is when the network has restrictions (Problem \ref{def:subgraph}). Instead of allowing all possible channels, we assume some of them cannot occur in real life. In this case, we are given a graph with all the potential channels, the sequence of transactions and the capital, and we want to find the induced subgraph that maximizes the profit. 
We prove that the problem of deciding whether all given transactions can be executed in the given graph with a fixed capital is NP-complete.

The graph is given so the capital needed to open the channels is fixed in each given instance. Thus, we assume the capital corresponds solely to the capital we lock on the edges but not the one we require to open the channels. 
\begin{theorem}\label{thm:graphrest}
Problem \ref{def:subgraph} is NP-complete.
\end{theorem}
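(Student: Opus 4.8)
The plan is to establish both membership in NP and NP-hardness. Membership is the easy direction: given a candidate induced subgraph together with a capital assignment $\cleft,\cright$ on its edges and a routing $S_e$ for every transaction, one can in polynomial time replay the sequence of transactions, check that the running balance $\sum_{i=1}^j S_e(i)\cdot v_i$ stays within $[-\cright(e),\cleft(e)]$ for every edge and every prefix, and check that $\sum_e(\cleft(e)+\cright(e))\leq C$. The only subtlety is bounding the size of the certificate: the capital on each edge never needs to exceed the sum of the transaction values, so the numbers involved are polynomial in the input size, and the routing is an $n\times|E|$ matrix over $\{-1,0,1\}$; hence the whole witness is polynomial.

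For hardness I would reduce from a suitable NP-hard problem that already has a ``routing/path-selection'' flavour; the natural candidates are the edge-disjoint paths problem, or a partition/subset-sum style problem planted into the graph, or 3SAT encoded by gadgets. Given the style of the earlier single-channel reduction, I expect the cleanest route is to reduce from an NP-hard routing problem such as \textsc{Edge-Disjoint Paths} (or \textsc{$2$-commodity integral flow}): create one transaction per commodity with its source and sink, make each potential edge have just enough capital to carry one unit in one direction, and add auxiliary ``priming'' transactions so that the only way to satisfy all demands within the global capital budget $C$ is to route the commodities along pairwise edge-disjoint paths. The key steps are: (i) pick the source problem and fix its instance; (ii) build the potential-channel graph $G(V,E)$, mirroring the source graph possibly with small gadgets so that Lemma~\ref{lem:twotx} is not violated (every node touched by at least two transactions); (iii) set the transaction sequence and its ordering so that capital on an edge cannot be ``reused'' across two transactions that would need opposite balance extremes simultaneously; (iv) set $C$ to the exact total capital of a ``good'' routing; (v) prove the two directions of the equivalence — a feasible routing within capital $C$ yields a solution to the source instance, and vice versa.

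The main obstacle I anticipate is step~(iii): unlike classical disjoint-paths, here an edge can carry many transactions as long as the net signed flow stays bounded, so capacity is not simply ``one transaction per edge.'' I would neutralise this by exploiting that \emph{the order of transactions matters} — inserting transactions that push an edge's balance to one extreme forces later transactions that would overshoot to be routed elsewhere — effectively simulating a hard capacity constraint via temporal scheduling, exactly the ``twist'' the introduction advertises. A secondary concern is ensuring the reduction is parsimonious enough that no spurious feasible routing exists, which typically requires a careful accounting argument that the global budget $C$ is tight, analogous to the three-case contradiction argument used in the single-channel NP-hardness proof above. Once the gadget is fixed, both directions of the equivalence should follow by the same style of prefix-sum bookkeeping.
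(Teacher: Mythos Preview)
Your NP-membership argument is fine, but the hardness direction is only a plan, not a proof: you never actually build the reduction, and the obstacle you yourself flag in step~(iii) is exactly the heart of the matter. In this model an edge is \emph{not} a unit-capacity resource --- a single channel can carry arbitrarily many transactions provided the running prefix sum stays in its interval --- so an \textsc{Edge-Disjoint Paths} instance does not map over without a concrete gadget that forces one-shot use. Your suggestion to ``push an edge's balance to one extreme'' via auxiliary transactions is the right instinct, but you have not specified those transactions, their order, or the tight budget $C$, and without that accounting the ``no spurious feasible routing'' direction is unproved. As written, the proposal would not survive a referee asking for the actual construction.

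The paper sidesteps all of this with a much simpler reduction from \textsc{Partition}. It takes a four-cycle $e\text{--}b\text{--}d\text{--}c\text{--}e$, sets $C=2v$ where $v=\sum_i s(a_i)$, and begins with four ``priming'' transactions of value $v/2$ that exhaust the budget and leave exactly $v/2$ of usable capital on each of the two $e\to d$ paths. The remaining $m$ transactions are all $(e,d,s(a_i))$; since there are only two routes and each has room for at most $v/2$, executing them all forces an equal split of the $s(a_i)$, i.e.\ a partition. This avoids disjoint-paths entirely: the numeric tightness of \textsc{Partition} does the work that your temporal-scheduling gadget was meant to do, and the whole instance lives on four vertices. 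If you want to salvage your approach, you would need a comparably explicit gadget and budget; otherwise, the \textsc{Partition}-on-a-$4$-cycle argument is both shorter and complete.
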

\section{Online Channel Design}
In this section, we study the online case, assuming no prior knowledge for the future transactions. When there is a transaction request we instantly decide whether to execute it or not through our network, assuming we have enough capital on the edges of the path we want to route the transaction. If there is not enough capital on some of the edges, we can refund a channel, which costs 1,  the same as opening a new channel.
%
%
\subsection{Single Channel with Capital Constraints}
Similarly to the offline case, we first focus on the simpler case where we have a single edge and limited capital. The transactions arrive online, for each transaction we immediately decide whether it is accepted.

\begin{theorem}
There is no competitive algorithm for adaptive adversaries.
\end{theorem}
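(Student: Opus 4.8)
\emph{Proof plan.} The plan is to show that no online algorithm $\mathrm{ALG}$ can be $c$-competitive for any $c$, by describing an adaptive adversary that produces a transaction sequence on which $\mathrm{ALG}$ accepts only $O(1)$ transactions while the optimal offline solution accepts arbitrarily many. First I would observe that since the network is a single fixed channel $e$, its opening cost is a constant offset and the quantity to be maximized is just the number of accepted transactions. Fix the capital to $\cleft(e)=\cright(e)=1$ and track the channel ``position'' $p_j=\sum_{i\le j}S(i)v_i$, which the capital constraint forces into $[-1,1]$. The adversary's opening (commitment) move is a single transaction $t_1$ of value $+1$, and it then branches on whether $\mathrm{ALG}$ accepts or rejects $t_1$.

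In the case that $\mathrm{ALG}$ accepts $t_1$, its position is pinned at $p=+1=\cright(e)$, so it can no longer route any positively-valued transaction. I would then have the adversary issue $N$ transactions of value $+\delta$ for a tiny $\delta>0$: $\mathrm{ALG}$ is forced to reject all of them and ends with one accepted transaction in total, whereas the optimum rejects $t_1$ and accepts $\lfloor 1/\delta\rfloor$ of the $\delta$-transactions (filling the channel from $0$ up to $+1$). Letting $\delta\to 0$ drives $\mathrm{OPT}/\mathrm{ALG}$ to infinity; this is the familiar phenomenon that committing the whole capital to one transaction is wasteful when many tiny ones are available.

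In the case that $\mathrm{ALG}$ rejects $t_1$, it stays at $p=0$, a position from which every transaction of value $+2$ or $-2$ is infeasible. I would then have the adversary issue the length-$N$ alternating sequence $-2,+2,-2,+2,\dots$: $\mathrm{ALG}$ must reject all of it and ends with zero accepted transactions, while the optimum accepts $t_1$ (moving to $+1$) and then every flood transaction, with the position bouncing between $+1=\cright(e)$ and $-1=-\cleft(e)$, for a total of $N+1$ accepted transactions. Letting $N\to\infty$ again drives $\mathrm{OPT}/\mathrm{ALG}$ to infinity.

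Combining the two cases, in either branch $\mathrm{OPT}\to\infty$ while $\mathrm{ALG}=O(1)$, so no algorithm is $c$-competitive for any constant $c$, even if an additive constant is allowed (choose $\delta$ small enough, resp.\ $N$ large enough). The construction also survives randomization against an adaptive adversary: the adversary branches on $\mathrm{ALG}$'s \emph{realized} action on $t_1$, and in either branch no outcome of $\mathrm{ALG}$'s coins can feasibly accept a flood transaction, so the same bound holds for $\mathbb{E}[\mathrm{ALG}]$ versus $\mathbb{E}[\mathrm{OPT}]$. The step I expect to be the main obstacle is getting the reject-branch flood right: one needs a single repeated pattern that is infeasible from the algorithm's position $0$ yet indefinitely sustainable from the optimum's position $+1$. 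A one-directional flood does not work here, because the cardinality objective never penalizes accepting many small transactions and the algorithm would simply recover; this is precisely where the two-sided capacity constraint must be exploited, via the alternating $\pm 2$ pattern that oscillates between the two capital caps.
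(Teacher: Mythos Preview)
Your proof is correct and follows essentially the same strategy as the paper's: an adaptive adversary issues a first ``commitment'' transaction, branches on the algorithm's accept/reject decision, and then floods with a sequence that the optimal offline solution can sustain indefinitely from its position while the algorithm is stuck. The only cosmetic difference is that the paper uses a single alternating $\pm 10$ flood in both branches (with capitals $\cleft=\cright=5$ and only the \emph{second} transaction differing, $5$ vs.\ $4$, to set up the alignment), whereas you use two different floods (tiny $+\delta$'s vs.\ alternating $\pm 2$'s); your additional remark covering randomized algorithms is a bonus the paper does not spell out.
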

\begin{proof}
Suppose we have a channel with $\cright = \cleft = 5$. Transactions from left to right have positive values, those from right to left have negative values.
Let us consider two different transaction sequences:
\begin{enumerate}
\item $(1,5,-10,10,-10,10,\ldots)$
\item $(1,4,-10,10,-10,10,\ldots)$
\end{enumerate}
Apart from the second transaction, both sequences are identical: The first transaction has value $1$, starting with the third transaction we always move the complete capital with every transaction. The only difference is the second transaction. 

If some online algorithm accepts the first transaction, then the adversary presents the first sequence; if the online algorithm denies the first transaction, then the adversary reveals the second sequence. Therefore, no matter whether this online algorithm accepts the first transaction or not, it can at most accept one transaction, while the optimal offline algorithm can accept almost all transactions (in case of the first sequence, the offline algorithm only needs to deny the first transaction, in case of the second sequence it will accept all transactions). 
\end{proof}
 
%
%
\subsection{Channel Design for Maximum Profit}
We assume again that we want to execute all transactions, thus the optimal graph does not contain cycles. Our objective is to minimize the capital, given all transactions will be executed through our payment network. Wlog, we assume the PSP is a node in the network. Similarly to the offline case, we show that constructing a star network to connect the nodes with payment channels is a good solution. Specifically, we present a log-competitive algorithm that takes advantage of the star graph structure. \vspace{0.3cm}

\begin{algorithm}[h] \label{alg:online}
 \KwData{online sequence of transactions $t_i=(s_i,r_i,v_i)$}
 \KwResult{capital $C$}
 We denote by $s$ the node corresponding to the PSP.\\
 $E \leftarrow \emptyset$\\
 $C \leftarrow 0$\\
 \For{each transaction $t_i$}{
    \uIf{$s_i$ is not connected to $s$}{
    $E \leftarrow E \cup (s_i, s)$ \\
    $c_{s_i,s} \leftarrow v_i$,
    $c_{s,s_i} \leftarrow v_i$\\
    $C \leftarrow C+1$
    }
    \uElseIf{$c_{s_i,s} < v_i$}{
    $c_{s_i,s} \leftarrow c_{s_i,s} + v_i$\\
    $c_{s,s_i} \leftarrow c_{s,s_i} + v_i$\\
    $C \leftarrow C+1$
    }
    \Else{
    $c_{s_i,s} \leftarrow c_{s_i,s} -v_i$\\
    $c_{s,s_i} \leftarrow c_{s,s_i} + v_i$
    }
    For the case of $r_i$ we follow a similar (invert) procedure.
    }
    \For{all $i \neq s$}{
    $C \leftarrow C + c_{i,s} + c_{s,i}$
    }
Return capital $C$\\
 \caption{\tt OnlineMaxProfit}
\end{algorithm}

In Algorithm \texttt{OnlineMaxProfit}, we gradually form a star where the center is the PSP. At each step, we check whether there is enough capital on the edges to and from the center to execute the current transaction. If the capital on an edge is smaller that the value of the current transaction, we refund the channel and add to the capacity of this edge twice the value of the current transaction.

\begin{theorem}
Algorithm \texttt{OnlineMaxProfit} is $\Theta(\log C_{opt})$-competitive.
\end{theorem}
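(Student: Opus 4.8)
The plan is to analyze the two sources of capital in Algorithm \texttt{OnlineMaxProfit} separately: the edge-opening/refunding cost (the $C \leftarrow C+1$ increments) and the capital locked on the edges at the end (the final loop adding $c_{i,s}+c_{s,i}$), and to bound each by $O(\log C_{opt})$ times the optimal capital. Since by Lemma~\ref{lem:nocycles} we may assume the optimal network is a forest (in fact, by the surrounding discussion, a tree), and since the star on the same vertex set is a $2$-approximation in the offline setting, the ``geometry'' part of the bound is already understood; the online penalty comes entirely from the fact that we must commit edge capacities before seeing future transactions and pay a refund each time we guessed too low.

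First I would focus on a single edge $(i,s)$ of the star and track how its capacity $c_{i,s}$ evolves. Each time a transaction through $i$ arrives whose value exceeds the current capacity, the algorithm doubles-up: it adds $v_i$ (hence at least the current capacity, roughly) and pays $1$. The key observation is that the capacity on this edge is therefore non-decreasing and, whenever it is increased, it at least... grows geometrically, so the number of refunds on edge $(i,s)$ is $O(\log(\text{final capacity of } (i,s)))$. Summing over all edges, the total opening-plus-refund cost is $\sum_i O(\log c^{\max}_{i,s})$. The final locked capital is $\sum_i (c_{i,s}+c_{s,i})$, which is bounded by a constant times the largest single transaction value routed through each edge, and this in turn is at most (a constant times) the optimal capital $C_{opt}$, because the optimal solution must itself be able to route that transaction and so must lock at least that much somewhere along its path. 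Combining, $C \le O(\log C_{opt}) \cdot C_{opt}$, giving the upper bound; I would also exhibit a matching lower-bound instance (a single edge with a geometrically increasing sequence of one-directional transactions interleaved so the optimum stays small while any online algorithm is forced into $\Omega(\log C_{opt})$ refunds) to get $\Theta(\log C_{opt})$.

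The main obstacle I anticipate is making the ``geometric growth'' claim precise and using it correctly: the increment added on a refund is $v_i$, not the current capacity, so one must argue that when a refund is triggered we have $v_i > c_{i,s}$, hence the new capacity $c_{i,s}+v_i > 2 c_{i,s}$ — this is where the doubling comes from — and then relate the number of doublings to $\log$ of the final capacity rather than $\log$ of $C_{opt}$ directly. Closing that last gap requires showing the final per-edge capacity of the star is within a constant (or at most polynomial, which still gives $O(\log C_{opt})$) factor of $C_{opt}$; for this I would invoke the amortized-capital argument from the proof that the star is a $2$-approximation, applied edge by edge, being careful that the online algorithm's refunds only ever over-provision relative to that offline star assignment, never under-provision, so feasibility is maintained throughout. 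A secondary subtlety is the handling of both endpoints $s_i$ and $r_i$ of each transaction and the asymmetry between the ``left'' and ``right'' capacities $c_{i,s}$ and $c_{s,i}$; I expect this to be bookkeeping rather than a genuine difficulty, since the algorithm updates them in lockstep.
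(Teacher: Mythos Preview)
Your proposal is correct and follows essentially the same approach as the paper: separate the refund cost from the locked capital, use the doubling observation (a refund is triggered only when $v_i > c_{i,s}$, so the new capacity exceeds $2c_{i,s}$) to bound the number of refunds per edge by a logarithm, and invoke the offline star $2$-approximation to bound the final locked capital by $O(C_{opt})$. You are, if anything, more careful than the paper about the step linking $\log(\text{final per-edge capacity})$ to $\log C_{opt}$, and you explicitly plan a lower-bound instance for the $\Theta$; the paper's proof asserts the $\Theta$ but only argues the upper bound.
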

\begin{proof}
The star is a $2$-approximation to the optimal offline solution, thus we start with a competitive ratio of $2$. The way we update the capacities, each time adding twice the value of the transaction if the capacity is less than the transaction's value, yields also a competitive ratio of two on the edges' capacities. Moreover, at each such step we at least double the capacity of an edge thus we reach the edge's optimal capital, $C_{e}$, in $\log C_{e}$ steps. If we sum over all edges, in total we refund the channels at most $(n-1)\log C_{edges}$ times, where $n$ is the number of nodes in the network and $C_{edges}$ the edges' optimal capital of the offline solution. Therefore, algorithm \texttt{OnlineMaxProfit} returns $C\leq (n-1)\log C_{edges} + 4C_{edges}$, while the offline solution requires $C_{opt} = (n-1) + C_{edges}$. This yields a competitive ratio of $\Theta (\log C_{opt})$.
\end{proof}

\section{Conclusion} 
We introduced a graph theoretic framework for payment networks. We studied the problem for a specific epoch, i.e., for a fixed number of transactions. This restriction is due to privacy issues, such as timing attacks on the payment network that can leak information on the customers' personal data.
We tried to maximize the profit (the number of accepted transactions minus the number of generated channels) and to minimize the capital needed to execute these transactions. Due to the multi-objective nature, there are several versions of this problem. 
In this paper, we mainly focused on two interesting variations: 
\begin{enumerate}
\item How to choose transactions to execute on a single channel with given capital assignments to maximize the profit, 
\item How to design a network and assign capitals to accept all transactions and minimize the needed capital. 
\end{enumerate}
It turns out, these two problems are challenging, as we show that the first problem and a variation of the second one are both NP-hard. 
We propose a dynamic programming based algorithm for the single channel problem and show that it is an FPTAS. 
For the network design and capital assignment problem, 
we show that stars achieve approximation ratio $2$. In other words, hubs are not only an implementable and privacy-guaranteed solution, as mentioned in \cite{heilman2017tumblebit} and \cite{green2017bolt}, but also a satisfactory solution for PSP from the profit-maximization point of view. 

We also studied the online versions of these problems. For the single channel case we show that it is impossible to design a competitive algorithm against an adaptive adversary. For the online channel design for maximum profit, we devise an $O(\log C)$-competitive online algorithm based on the star structure. 

The results presented in this paper and the proposed algorithms can be applied to other fields such as traffic network design. For example, every airline would want to maximize the profit and to minimize the costs (of creating new routes and purchasing new airplanes). Interestingly, similar to what we discovered, hubs are indeed used by almost all airlines, e.g., most flights of the Turkish airline departure from or fly to Istanbul. 

Apart from capital assignment, fee assignment of payment networks \cite{avarikioti2018channelfees} is also related to the traffic network design problem. 
One need to pay for using highways in some countries (e.g., Greece, China and France), thus the companies need to decide which cities are connected by highways and how much one needs to pay for every path. In this way, the drivers prefer highways (analog to the payment channels) to other slow paths (analog to the main chain), and hence the profit is maximized. 
%
%

%
\bibliography{references}
\newpage
\appendix

\section{Omitted Proofs} 

\begin{proof}[Proof of Lemma \ref{lem:twotx}]
Assume node $u$ is in the graph and sends and receives less than two transactions. Since $u$ is part of the graph, it has at least one neighbor. Choose an arbitrary neighbor $v$ of $u$, connect all remaining neighbors of $u$ directly with $v$ (if not already connected), and then remove $u$. For each neighbor $w$ of node $u$, increase the capital of edge $(w,v)$ in the new graph by the capital of the removed edge $(w,u)$ (on both sides of the edge). Now all transactions routed originally through edge $(w,u)$ can be routed in the new graph through edge $(w,v)$, the new total capital needed is at most the same as the old capital, and there is enough capital to route all previously routable transactions.
We denote by $opt$ the profit of the optimal solution. The graph without $u$ has at least one edge less. Thus, the profit of the new graph for (at least) the same set of transactions is at least $opt-(1-\epsilon)+1 > opt$, since setting up the channel (edge) $(u,v)$ costs $1$ but $u$'s transaction fees are at most $1-\epsilon$. So we are better off without node $u$. 
\end{proof}

\begin{proof}[Proof of Lemma \ref{lem:tree}]
Suppose we are given the following sequence of transactions and capital $C=6a+6$ where $a>11$:\\
$t_i=(v_2,v_1,1), \mbox{ for} \ 1 \leq i \leq a$\\
$t_i=(v_2,v_3,1),  \mbox{ for}\ a+1 \leq i \leq 2a$\\
$t_i=(v_4,v_3,1),  \mbox{ for}\ 2a+1 \leq i \leq 3a$\\
$t_i=(v_4,v_5,1), \mbox{ for}\ 3a+1 \leq i \leq 4a$\\
$t_i=(v_6,v_5,1), \mbox{ for}\ 4a+1 \leq i \leq 5a$\\
$t_i=(v_6,v_1,1),  \mbox{ for}\ 5a+1 \leq i \leq 6a$\\
For this example, we will show that the optimal solution that maximizes the profit given the capital $C=6a+6$ returns a graph that contains a cycle. 

One solution is the graph $G(V,E)$, where $V=\{v_1,v_2,v_3,v_4,v_5,v_6\}$, $E=\{(v_1,v_2),(v_2,v_3),$ $(v_3,v_4),(v_4,v_5),(v_5,v_6),(v_6,v_1)\}$. The profit in $G$ is $6a-6$, since all $6a$ transactions are executed and connect directly and $6$ channels are opened. The spent capital is $6a+6 = C$ since we open $6$ edges and lock capital $a$ on the sender side of each edge. 

If the graph is not connected we lose at least $a$ transactions and remove at most $6$ edges. So, the total profit decreases by at least $a-6 > 5$, hence the optimal graph is connected. 

Suppose now the optimal graph does not contain any cycle. Since the optimal graph is connected, it is a spanning tree. Let's denote by $\ell$ the number of leaves in the tree with $2\le \ell \le 5$, since the spanning tree has $5$ edges. If we want to have at least the profit of the cycle described above, we must deliver at least $6a-1$ transactions, since we only saved the opening cost of a single edge.

We can see that the capital locked on the edges connecting these leaves is at least $2a\ell-2$, since, for every node, both transactions involved are either outgoing or incoming. For every other edge, the capital locked on it is at least $a-1$. Therefore, the total number of locked capital is at least $2a\ell-2 + (a-1)(5-\ell) \ge 7a-5 > 6a + 6 = C$. 
\end{proof}

\begin{proof}[Proof of Lemma \ref{lem:FWSS}]
We will reduce \textsf{Subset Sum (SS)} \cite{Karp1972} to \textsf{FWSS}. \\
\textsf{SS}: Given a set of integers $U = \{a_1, a_2, \ldots, a_n \}$, and integer $A$, is there a non-empty subset $U' \subseteq U$ such that $\sum_{a_i \in U'} a_i = A$? \\
Given an instance of \textsf{SS}, we define $n$ different instances of \textsf{FWSS,} one for each possible value of $l$, where the set of integers $U$ and the integer value $A$ are the same for every \textsf{FWSS} instance as in \textsf{SS}. If one of the \textsf{FWSS} instances returns ``yes'' then we return ``yes'', else we return ``no''. 
If any instance of \textsf{FWSS} returns ``yes'', then the same subset satisfies the \textsf{SS} problem, thus it must return ``yes''.
If all instances of \textsf{FWSS} return ``no'', then there is no set satisfying the \textsf{SS} problem, since we checked all possible set sizes. Thus, \textsf{SS} must return ``no'' as well. The transformation is polynomial to the input.
\end{proof}

\begin{proof}[Proof of Lemma \ref{lem:nocycles}]
Suppose the optimal graph contains a cycle. We choose an arbitrary edge of the cycle, denoted $e=(i,j)$, where $i,j$ the nodes of edge $e$. We remove edge $e$ and reroute all transactions using edge $e$ through the path from $i$ to $j$. This path exists since removing an edge from a cycle cannot disconnect the graph. The profit of the graph without $e$ is the profit of the optimal graph plus $1$ (the cost of opening edge $e$). This is a contradiction, since the optimal network maximizes the profit.
\end{proof}

\begin{proof}[Proof of Lemma \ref{lem:maxprofit}]
We present the following algorithm:
\begin{enumerate}
\item Traverse the input transactions and create a list $L$ that contains the number of transactions between every two nodes (potential edges).
\item Traverse list $L$ as follows:\\
If the number of transactions is at least $2$ and adding the edge between the two nodes does not form a cycle, add the edge to the (initially empty) graph.
\end{enumerate}
The algorithm guarantees there is a path between every two nodes that want to execute at least two transactions (Lemma \ref{lem:twotx}). Thus, all transactions that increase the PSP's profit are executed. Moreover, the algorithm does not contain a cycle, so the output graph is minimal regarding the edges. 
The first argument maximizes $|S|$ and the second minimizes $|E|$ for the given sequence of transactions. Therefore, the algorithms returns a graph that achieves maximum profit ($max\ |S| -|E|$).\\
The running time of the algorithm is linear to the number of transactions, $\Theta(n)$.
\end{proof}

\begin{proof}[Proof of Lemma \ref{cc}]
Suppose the graph of the optimal solution is always connected. 
Assume now we are given the following sequence of transactions: $t_1=(v_1,v_2,1), t_2=(v_1,v_2,1), t_3=(v_3,v_4,1), t_4=(v_3,v_4,1), t_5=(v_1,v_3,1)$. Since the optimal graph for this example is connected, there are at least three edges in the graph. Thus, the optimal profit is $|S|-|E|\leq 5(1-\epsilon)-3=2-5\epsilon$. However, if we consider the graph with edges $(v_1,v_2)$ and $(v_3,v_4)$, the PSP's profit is $2-4\epsilon$, greater than the profit of the optimal solution, which is a contradiction.
\end{proof}

\begin{proof}[Proof of Lemma \ref{lem:proftx}]
Suppose all transactions are \textsf{profitable transactions}. Lets assume we are given the following sequence of transactions: $t_1=(v_1,v_2,1), t_2=(v_1,v_2,1), t_3=(v_3,v_4,1), t_4=(v_3,v_4,1), t_5=(v_1,v_3,1)$. It is straightforward to see that any solution that executes all transactions must return a connected graph. However, we showed in the 
proof of Lemma \ref{cc} that the optimal graph for this example is not connected. Thus, there are transactions that are not executed in the optimal solutions and hence they are not \textsf{profitable transactions}.
\end{proof}

\begin{proof}[Proof of Lemma \ref{lem:starnotopt}]
We present a sequence of transactions for which any star graph requires larger capital to execute all transactions than the optimal solution, as illustrated in Figure \ref{fig:starnotopt}. Moreover, finding the optimal solution is not trivial in our example, even though we only consider unitary transactions.\\
The sequence of transactions is the following
$$ t_1=(v_4,v_2,1), t_2=(v_4,v_7,1), t_3=(v_2,v_6,1), t_4=(v_5,v_2,1), t_5=(v_6,v_5,1),$$ 
$$ t_6=(v_4,v_5,1), t_7=(v_3,v_2,1), t_8=(v_2,v_1,1), t_9=(v_1,v_3,1), t_{10}=(v_3,v_4,1) $$
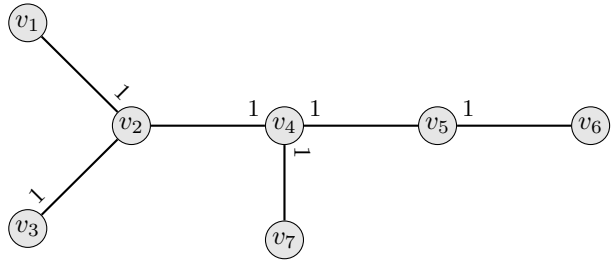
\begin{figure}
\begin{center}
 \begin{tikzpicture}
    \node[main node] (1) {$v_1$};
    \node[main node] (2) [below right = 1cm and 1cm of 1]  {$v_2$};
    \node[main node] (3) [below left = 1cm and 1cm of 2] {$v_3$};
    \node[main node] (4) [right =  1.5cm of 2] {$v_4$};
    \node[main node] (5) [right = 1.5cm of 4] {$v_5$};
    \node[main node] (6) [right = 1.5cm of 5] {$v_6$};
    \node[main node] (7) [below = 1cm of 4] {$v_7$};

    \path[draw,thick]
    (1) edge node[pos=0.9,sloped,above] {\small$1$} (2)
    (2) edge node[pos=0.9,sloped,above] {\small$1$} (3)
    (2) edge node[pos=0.9,sloped,above] {\small$1$} (4)
    (4) edge node[pos=0.1,sloped,above] {\small$1$} (7)
    (4) edge node[pos=0.1,sloped,above] {\small$1$} (5)
    (5) edge node[pos=0.1,sloped,above] {\small$1$} (6);
\end{tikzpicture}
\end{center}
\caption{The optimal graph and capital assignment to execute all transactions in the given sequence. The capital locked on each channel is illustrated by the number ($1$) above each edge and the position indicates the direction (in the initial state). }\label{fig:starnotopt}
\end{figure}
Figure \ref{fig:starnotopt}  illustrates the optimal graph (not necessarily unique); the capital needed to execute all transactions is $6$, which is the least possible since any tree with seven nodes has six edges and all of them are used at least one time with unitary values (in this example). There are seven different stars, one for each node as a center. It is easy to see that the capital needed for each one of them is at least $7$, which is strictly larger than the optimal.
\end{proof}

\begin{proof}[Proof of Theorem \ref{thm:graphrest}] 
We will reduce \textsf{Partition}, a known NP-complete problem \cite{Karp1972}, to Problem \ref{def:subgraph}.\\
\textsf{Partition:} Given a finite set $A=\{a_1, a_2, \dots, a_m\}$ and a size $s(a_i)\in \mathbb{Z}^+$ for each $a_i \in A$, is there a subset $A' \subseteq A$ such that $\sum_{a_i \in A'} s(a_i) = \sum_{a_i \in A-A'} s(a_i)$?\\
Given an instance of \textsf{Partition} we define an instance of Problem \ref{def:subgraph} as follows:\\
Let $v = \sum_{a_i \in A} s(a_i)$.
We consider the graph $G(V,E)$ with four nodes $V=\{b,c,d,e\}$ and edges $E=\{eb, bd, dc, ec \}$. We define the capital $C=2v$ and the sequence of $n=m+4$ transactions $$\{(d,b,v/2), (b,e,v/2), (d,c,v/2), (c,e,v/2), (e,d,s(a_1)), (e,d,s(a_2)), \dots, (e,d,s(a_n))\}$$
We will prove that this instance of \textsf{Partition} is a ``yes'' instance if and only if all transactions can be executed in $G$ with the given capital $C$. 
We denote by $T$ the last $m$ transactions defined above. 
Suppose all the transactions in the instance we defined can be executed using the capital $C$. 
This implies we can divide $T$ into two groups $T', T-T'$ that each sum to at most $C/2$ and route each group through one of the two paths from $e$ to $d$ in $G$. Since $\sum_{t_i \in T} v_i = C/2$, we have $|T'|=|T-T'|= C/4$. 
We define the set $A'=\{a_{i-4}|t_i \in T'\} \subseteq A$. It holds that $$\sum_{a_i \in A'} s(a_i) = \sum_{t_i \in T'} v_i = C/4 = \sum_{t_i \in T-T'} v_i =
\sum_{a_i \in A-A'} s(a_i)$$
For the opposite direction, suppose we cannot execute all transactions in $G$ with capital $C$. Since the first four transactions can always be executed with capital $C$, we cannot execute all transactions in $T$ with capital $C$. Thus, in the optimal solution we cannot partition the transactions in $T$ in two groups $T', T-T'$ with equal sum. Since we defined the values of transactions in $T$ to be the sizes of the elements in $A$, we conclude there is no subset $A' \subseteq A$ such that $\sum_{a_i \in A'} s(a_i) = \sum_{a_i \in A-A'} s(a_i)$.
\end{proof}

\end{document}